\documentclass[12pt,onecolumn]{IEEEtran}
\usepackage{amsmath,amssymb,amsthm,euscript,epsfig,times}
\usepackage{times}

\newcommand{\pr}{{\mathbb{P}}}
\newcommand{\ex}{{\mathbb{E}}}
\newcommand{\E}{{\EuScript{E}}}
\newcommand{\openone}{\leavevmode\hbox{\small1\normalsize\kern-.33em1}}
\newcommand{\Pc}{{\cal{P}}}

\newtheorem{thm}{Theorem}
\newtheorem{lem}{Lemma}
\newtheorem{prop}{Proposition}
\newtheorem*{cor}{Corollary}

\newtheorem{defs}{Definition}

\DeclareMathOperator{\poly}{poly}

\title{Communication under Strong Asynchronism
\thanks{This work was supported in part by NSF under Grant
No.~CCF-0515122, and by a University IR\&D Grant from Draper
Laboratory. }}
\author{Aslan Tchamkerten, Venkat Chandar, and Gregory Wornell\\
Electrical Engineering and Computer
  Science Department \\ Massachusetts Institute of Technology\\
Cambridge, MA 02139, USA \\
Email: $\{$tcham,vchandar,gww$\}$@mit.edu}

\begin{document}

\maketitle

\begin{abstract} 
 We consider asynchronous communication over point-to-point
discrete memoryless channels without feedback. The transmitter starts sending one
block codeword at an instant that is uniformly distributed within a certain time
period, which represents the level of asynchronism.  The receiver, by means of
a sequential decoder, must isolate the message without knowing when the
codeword transmission starts but being cognizant of the asynchronism level. We
are interested in how quickly can the receiver isolate the sent message,
particularly in the regime where the asynchronism level is exponentially larger
than the codeword length, which we refer to as `strong asynchronism.'

 This model of sparse communication might represent the situation of a sensor that
remains idle most of the time and, only occasionally, transmits information to a
remote base station which needs to quickly take action. Because of the limited
amount of energy the sensor possesses, assuming the same cost per transmitted
symbol, it is of interest to consider minimum size codewords given the 
asynchronism level.

The first result is an asymptotic characterization of the largest asynchronism
level, in terms of the codeword length, for which reliable communication can be
achieved: vanishing error probability
can be guaranteed as the codeword length $N$ tends to infinity while the
asynchronism level grows as $e^{N\alpha}$ if and only if $\alpha$ does not
exceed the {\emph{synchronization threshold}}, a constant that admits a simple
closed form expression, and is at least as large as the capacity of the
synchronized channel.

The second result is the characterization of a set of achievable strictly
positive rates in the regime where the asynchronism level is exponential in the
codeword length, and where the rate is defined with respect to the expected
(random) delay between the time information starts being emitted until the time
the receiver makes a decision. Interestingly, this achievability result is
obtained by a coding strategy whose decoder not only operates in an
asynchronously, but has an almost universal decision rule, in the sense
that it is almost independent of the channel statistics.

As an application of the first result we consider antipodal signaling over a
Gaussian additive channel and derive a simple necessary condition between
blocklength, asynchronism level, and SNR for achieving reliable communication. 
\end{abstract}

\keywords
Asynchronous communication, detection and isolation problem, discrete-time
communication, error exponent, low probability of
detection, point-to-point communication, quickest detection, sequential
analysis, sparse communication, stopping
times

\normalsize

\section{Introduction}
\label{intro}

A common assumption in information theory is that `whenever the transmitter
speaks the receiver listens.' In other words, in general, there is the
assumption of perfect synchronization between the transmitter and the receiver
and, basic quantities, such as the channel capacity, are defined under this
hypothesis \cite{Sha2}. In practice this assumption is rarely fulfilled. Time
uncertainty due, for instance, to bursty sources of information often causes
asynchronous communication, i.e., communication for which the receiver has only
a partial knowledge of {\emph{when}} information is sent.

There are, however, notable channels for which asynchronism effects have been
studied from an information theoretic standpoint. An example is the multiple
access channel (see, e.g., \cite{CMP,HuH,Po,Ve}) for which the capacity region
has been computed under various assumptions on the users' asynchronism. Another
important example is the insertion, deletion, and substitution channel for
which only bounds on the capacity are known (see, e.g.,
\cite{AW,D2,DM3,DG}).


In this paper we propose an information theoretic framework that models users'
asynchronism for point-to-point discrete-time communication without feedback.
We consider the situation where the transmitter may start sending information
at a time unknown to the receiver. The time transmission starts is assumed to
be uniformly distributed within a certain interval, which defines the
asynchronism level between the transmitter and the receiver. A suitable notion
of rate is introduced and scaling laws between block message size and
asynchronism level are given for which reliable communication can or cannot be
achieved.\footnote{We refer to `reliable communication' whenever arbitrary low
error probability can be achieved.} Our first result is the characterization of
the highest asynchronism level with respect to the codeword length under which reliable communication can still be
achieved. This limit is attained by a coding strategy that operates at
vanishing rate. This strategy also allows for communication at positive rates
while operating at asynchronism levels that are exponentially larger than the
codeword length. 

In Section \ref{pform} we formally introduce our model and draw connections
with the related `detection and isolation' problem in sequential analysis.
Section \ref{result} contains our main results, Section \ref{analysis} is
devoted to the proofs, and we end with final remarks in Section
\ref{conclusione}. The proofs make often use of large deviations type
bounding techniques for which we refer the reader to \cite[Chapters 1.1 and
1.2]{CK} or \cite[Chapter 12]{CT}.

\section{Problem formulation and background}
\label{pform}

We consider discrete-time communication over a discrete memoryless channel
characterized by its finite input
and output alphabets $\cal{X}$ and $\cal{Y}$, respectively, transition
probability matrix $Q(y|x)$, for all $y\in {\cal{Y}}$ and $x\in {\cal{X}}$, and
`noise' symbol $\star\in {\cal{X}}$ (see
Fig.~\ref{grapheess}).\footnote{Throughout the paper we always assume that for all
$y\in {\cal{Y}}$ there is some $x\in {\cal{X}}$ for which $Q(y|x)>0$.}
\begin{figure}
\begin{center}
\input{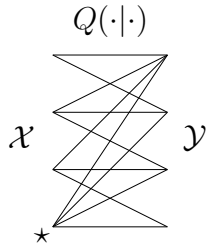}
\caption{\label{grapheess} Communication is carried over a discrete memoryless
channel. When `no information' is sent the input of the channel is the `$\star$' symbol.}
\end{center}
\end{figure}
 The codebook consists of $M\geq 2$ equally likely codewords of length $N$ composed of symbols from
$\cal{X}$ --- possibly also the $\star$ symbol. The transmission of a particular codeword starts at a
random time $\nu$, independent of the codeword to be sent, uniformly distributed in $[1,2,\ldots,A]$,
where the integer $A\geq 1$ characterizes the asynchronism level. We assume that the receiver knows $A$
but not $\nu$. If $A=1$ the channel is said to be synchronized. Throughout the paper, whenever we refer
to the capacity of a channel, it is intended to be the capacity of the synchronized channel. Throughout
the paper we only consider channels $Q$ with strictly positive capacity $C(Q)$.

Before and after the transmission of the information, i.e., before time $\nu$
and after time $\nu+N-1$, the receiver observes noise. Specifically,
conditioned on the value of $\nu$ and on the message to be conveyed $m$, the
receiver observes independent symbols $Y_1,Y_2,\ldots$ distributed as follows.
If $i\leq \nu-1$ or $i\geq \nu+N$, the distribution is   $Q(\cdot|\star)$. At any time $i\in [\nu ,\nu+1,\ldots, 
\nu+N-1]$ the distribution is $Q(\cdot|{c_{i-\nu+1}(m)})$, where $c_{n}(m)$
denotes the $n$\/th symbol of the codeword $c^N(m)$ assigned to message $m$.

The decoder consists of a sequential test $(\tau,\phi)$, where
$\tau$ is a stopping time with respect to the output sequence $Y_1,Y_2,\ldots$\footnote{Recall that a stopping time $\tau$ is an
integer-valued random variable with respect to a sequence of random variables
$\{Y_i\}_{i=1}^\infty$ so that the event $\{\tau=n\}$, conditioned on
$\{Y_i\}_{i=1}^{n}$,  is independent of $\{Y_{i}\}_{i=n+1}^{\infty}$ for all
$n\geq 1$.} indicating when decoding happens, and where $\phi$ denotes a
decision rule\footnote{Formally
$\phi$ is an ${\cal{F}}_\tau$-measurable map where ${\cal{F}}_1,{\cal{F}}_2,\ldots$ is the
natural filtration induced by the process $Y_1,Y_2,\ldots$}
 that declares the decoded message (see Fig.~\ref{grapheesss}).\footnote{In our model
one message
is sent in a certain interval {\emph{with probability
one}}.  An interesting extension of this model that we did not consider is
to give some
probability to the event where no
message is sent. The receiver knows that with some
probability $1-p$ a message starts being sent within a certain interval and that
with probability $p$ no message is sent.}
\begin{figure}
\begin{center}
\input{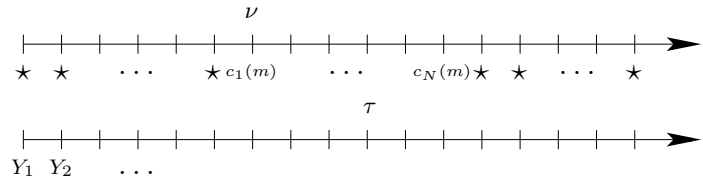}
\caption{\label{grapheesss} Time representation of what is sent (upper arrow)
and what is received (lower arrow). The `$\star$' represents the `noise'
symbol. At time $\nu$ message $m$ starts being sent and decoding occurs at time
$\tau$.}
\end{center}
\end{figure}

We are interested in {\emph{reliable and quick decoding}}. To that aim we first define
the average decoding error probability as
\begin{align*}
\pr({\EuScript{E}})=\frac{1}{A}\frac{1}{M}\sum_{m=1}^M\sum_{l=1}^A \pr_{m,l} ({\EuScript{E}}),
\end{align*}
 where $\EuScript{E}$ indicates the event that the decoded message does not
 correspond to the sent message, and where the subscripts $_{m,l}$ indicate the conditioning on the
event that message $m$ starts being sent at time $l$. Second, we define the
average communication rate with respect to the average delay it takes the
receiver to react to a sent message, i.e.
\begin{align}\label{vilag}
R=\frac{\ln M}{\ex(\tau-\nu)^+}
\end{align} with
  \begin{align*}
\ex(\tau-\nu)^+\triangleq\frac{1}{A}\frac{1}{M}\sum_{m=1}^M\sum_{l=1}^A \ex_{m,l}( \tau-l)^+
\end{align*}
where $x^+$ denotes $\max\{0,x\}$, and where $\ex_{m,l}$ denotes the
expectation with respect to $\pr_{m,l}$.\footnote{Here $\ln $ denotes the natural logarithm.} With the above definitions we now introduce
the notion of achievable rate with respect to a certain asynchronism level as well
as the
notion of {\emph{synchronization threshold}}.
\begin{defs}\label{tasso}
An asynchronism exponent $\alpha$ is {\emph{achievable}} at a rate $R$ if, for any $\varepsilon>0$, there exists a block code
with (sufficiently large) codeword length $N$, operating under asynchronism level $A=e^{(\alpha - \varepsilon)
N}$, while yielding a rate at least as large as $R-\varepsilon$ and an error
probability $\pr({\EuScript{E}})\leq
\varepsilon$. The supremum of the set of asynchronism exponents that are
achievable at rate $R$ is denoted $\alpha(R,Q)$. 
\end{defs}
\noindent Note that, for a given channel $Q$, the asynchronism exponent function
$\alpha(R,Q)$ is
non-increasing in $R$.
\begin{defs}\label{seuil}
The {\emph{synchronization threshold}} of a channel $Q$, denoted by
$\alpha(Q)$, is the supremum of the
set of achievable asynchronism exponents at all rates, i.e.,
$\alpha(Q)=\alpha(R=0,Q)$.
\end{defs}
\noindent Throughout the paper we often use the terminology `coding strategy' or `coding scheme' to denote an infinite sequence of pairs
codebook/decoder labeled by the blocklength. In particular, whenever we
refer to a coding strategy that `achieves a certain rate,' it is intended to be
asymptotically in the limit $N\rightarrow \infty$.

Let us comment on the above bursty communication model and its associated
notions of rate and synchronization threshold.
First observe that we do not introduce a feedback channel from the receiver to the
transmitter. With a noiseless feedback it is possible to inform  the transmitter of the
receiver's decoding time, say in the form of ack/nack, therefore allowing the sending of
multiple messages instead of just one as in our model. Here the noiseless
assumption is crucial. If  the feedback is noisy, the receiver's decision may be wrongly
recognized by the transmitter, which possibly may result in a loss of message
synchronization between transmitter and receiver (say the receiver hasn't yet decoded the
first message while the transmitter has already started to emit the second one). 
Therefore, in order to avoid a potential second source of asynchronism, we omit feedback in our study
and limit transmission to only one message.

The reason for defining the rate with respect to the average delay
$\ex(\tau-\nu)^+$ (see \eqref{vilag}) is motivated by the following considerations. At first sight,
a natural measure of delay may be the codeword length $N$. However, in light of
the use of sequential decoding, the codeword
length does not provide a measure of the delay needed for the information to be
reliably decoded. Another candidate for the delay one might consider is
$\ex(\tau)$ or, equivalently, $\ex \nu+\ex(\tau -\nu)$. The fact that this
delay takes into account the initial offset $\ex \nu$ can be regarded as a
weakness since this offset can be influenced neither by the transmitter nor by
the receiver. Also, with such a delay measure, in the regime of positive
asynchronism exponents we are interested in, the rate is always
(asymptotically) vanishing for any reliable coding strategy.\footnote{To see this consider the rate
defined as $\ln M/(\ex \nu+\ex(\tau -\nu))$. To achieve vanishing error
probability as $M$ (or $N$) tends to infinity, the reaction delay $\ex(\tau
-\nu)$ must grow at least linearly with $\ln M$ (if not this would imply that
reliable communication above capacity would be possible). Similarly, $M$ and
$N$ must satisfy $N\geq \ln M$. Also, in the regime of positive asynchronism
exponents, i.e., when $A=e^{N\alpha}$ for some $\alpha>0$, we have $\ex
\nu=e^{N\alpha}/2$ since $\nu$ is uniformly distributed in $[1,2,\ldots,A]$. 
Therefore, in the regime of positive asynchronism exponents, the rate $\ln
M/(\ex \nu+\ex(\tau -\nu))$ is vanishing as $N\rightarrow \infty$ for any
coding strategy that achieves arbitrarily low error probability.}  Instead, we
propose to consider $\ex(\tau-\nu)^+$, the average time the transmitter needs
to wait until the receiver makes a decision. 
Also note that, in the definition of achievable rate (Definition \ref{tasso}), we choose to grow $A$ with $N$. Indeed, when $A$ is fixed
the problem becomes trivial. By using sufficiently long codewords and simply decoding at the
(fixed) time $A+N-1$ the asynchronism effect on the rate can be made negligible.

We now briefly discuss the notion of synchronization threshold. This 
threshold is defined with respect to zero rate coding strategies, that is
strategies for which $\ln M/\ex(\tau-\nu)^+$ tends to zero (as $N\rightarrow
\infty$). However, because $\ex(\tau-\nu)^+$ and $N$ need not coincide in
general, zero rate coding strategies need not, in general, yield a vanishing
fraction $\ln M/N$ as $N$ tends to infinity. Indeed, as we will see, one can
operate arbitrarily closely to the synchronization threshold while having $\ln
M/N$ asymptotically bounded away from zero.



Perhaps the closest sequential decision problem our model relates to is a
generalization of the change-point problem, often called the `detection and
isolation problem,' introduced by Nikiforov in $1995$ (see \cite{N,Lai2} and
\cite{BN} for a survey). A process $Y_1,Y_2,\ldots$ starts with some initial
distribution and changes it at some unknown time. The post change
distribution can be any of a given set of $M$ distributions. By sequentially
observing $Y_1,Y_2,\ldots$ the goal is to quickly react to the statistical
change and isolate its cause, i.e., the post-change distribution. Hence, our
synchronization problem takes the form of a detection and isolation problem
where the change in distribution is induced by the transmitted message. However, to the
best of our knowledge studies related to the detection and isolation problem
usually assume that once the observed process jumps into one of its post-change
distributions, it remains in that state forever. This means that, eventually,
if we wait long enough, a correct decision is be possible. Instead, in the
synchronization problem the change in distribution is {\emph{local}} since it
only lasts the duration of a codeword length. In particular once the codeword
is `missed' no recovery is possible. Finally, optimal decoding rules for the
detection and isolation problem seem to have been obtained only in the limit of
small error probabilities $\pr(\EuScript{E})$ while keeping $M$, the number of post-change
distributions, fixed.\footnote{Here optimal decoding rules refer to sequential
tests yielding minimum reaction delay, usually a
function of $\tau-\nu$, given a certain error probability.} In our case we 
typically let $M$ grow as $(1/\pr(\EuScript{E}))^\xi$, for some $\xi>0$.

\section{Results}
\label{result}
 Our first result is the characterization of the synchronization threshold.
\begin{thm}\label{unow} For any discrete memoryless channel $Q$, the
synchronization threshold as given in Definition \ref{seuil} is given by
$$\alpha(Q)=\max_x
D(Q(\cdot|x)||Q(\cdot|\star))$$
where $D(Q(\cdot|x)||Q(\cdot|\star))$ is the divergence (Kullback-Leibler
distance) between
$Q(\cdot|x)$ and $Q(\cdot|\star)$. Furthermore, any synchronization threshold
$\alpha<\alpha(Q)$ can be achieved by a coding strategy that yields
$\lim_{N\rightarrow\infty}\ln
M/N>0$.
 \end{thm}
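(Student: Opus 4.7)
The statement has a converse piece, $\alpha(Q)\le\alpha^\star$ with $\alpha^\star\triangleq\max_x D(Q(\cdot|x)\|Q(\cdot|\star))$, and an achievability piece, exhibiting a coding scheme attaining any $\alpha<\alpha^\star$ while keeping $\liminf_{N\to\infty}\ln M/N>0$. My plan is to treat them separately, using a change of measure for the converse and a preamble-based construction for the achievability.

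For the converse I would compare the true distribution $\pr_{m,t}$ to the ``pure noise'' reference $\pr_0$ under which every $Y_i$ is i.i.d.\ $Q(\cdot|\star)$. Since $\pr_{m,t}$ and $\pr_0$ differ only on the window $[t,t+N-1]$, the log-likelihood ratio $\log L_{m,t}=\sum_{i=t}^{t+N-1}\ln\frac{Q(Y_i|c_{i-t+1}(m))}{Q(Y_i|\star)}$ has $\pr_{m,t}$-mean at most $N\alpha^\star$ and concentrates around it by the weak law. Thus for each $\delta>0$ and each event $E$,
\[
\pr_{m,t}(E)\;\le\; e^{N(\alpha^\star+\delta)}\,\pr_0(E)+o(1).
\]
Applied to the in-window success event $E_{m,t}=\{\phi=m,\,\tau\in[t,t+N-1]\}$, and using the combinatorial observation that each sample path belongs to at most $N$ of the events $\{E_{m,t}\}_{t}$ (so $\sum_{t}\pr_0(E_{m,t})\le N$), averaging over $t$ yields $\tfrac1A\sum_t\pr_{m,t}(E_{m,t})\le Ne^{-N(\alpha-\alpha^\star-\delta)}+o(1)$, which vanishes whenever $\alpha>\alpha^\star$. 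The complementary contribution, from sample paths that stop outside the codeword window, I would bound through the data-processing / Fano-type estimate $I((m,\nu);Y^\infty)\le N\alpha^\star+o(N)$, which for vanishing error probability forces $\ln M+\ln A\le N\alpha^\star+o(N)$ and rules out $\alpha>\alpha^\star$.

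For the achievability, fix $\alpha<\alpha^\star$, pick any $\gamma\in(\alpha/\alpha^\star,1)$ and any $x^\star\in\arg\max_x D(Q(\cdot|x)\|Q(\cdot|\star))$, and set $K=\lceil\gamma N\rceil$. The codebook is preamble-based: every codeword is $K$ copies of $x^\star$ concatenated with a length-$(N-K)$ data word drawn from a block code of rate $R<C(Q)$ for the synchronous channel $Q$. The decoder slides the length-$K$ log-likelihood window
\[
S_n\;=\;\sum_{i=n-K+1}^{n}\ln\frac{Q(Y_i|x^\star)}{Q(Y_i|\star)},
\]
stops at the first $n$ with $S_n\ge K(\alpha^\star-\eta)$ for a small $\eta>0$, and then runs maximum-likelihood decoding on the next $N-K$ symbols. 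Markov's inequality applied to the likelihood ratio gives $\pr_0(S_n\ge K(\alpha^\star-\eta))\le e^{-K(\alpha^\star-\eta)}$, so a union bound over the $A=e^{N\alpha}$ candidate false-alarm positions bounds the overall false-alarm probability by $e^{N\alpha-K(\alpha^\star-\eta)}\to 0$ by the choice of $\gamma$ and $\eta$. Missed preamble detection vanishes by the weak law under $\pr_{m,t}$, and conditional on correct synchronization the data stage is a standard length-$(N-K)$ synchronous transmission at rate $R<C(Q)$, whose error probability vanishes by the channel coding theorem. The resulting scheme yields $\pr(\E)\to 0$ while $\ln M/N\to(1-\gamma)R>0$.

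I expect the delicate step to be the out-of-window part of the converse: the sliding change-of-measure estimate only controls sample paths that stop inside $[\nu,\nu+N-1]$, yet in principle a decoder may delay past the window and still commit based on codeword symbols it has already observed. I plan to close this gap either by a truncation argument (noise observations accumulated after $\nu+N-1$ carry no information about $m$, so one may restrict without loss of generality to decoders with $\tau\le\nu+N+o(N)$) or by running the Fano-type bound at the level of the pair $(\phi,\tau)$, showing that reliable identification of $m$ forces an implicit reliable estimate of $\nu$ whose entropy the single-letter divergence $N\alpha^\star$ cannot afford when $\alpha>\alpha^\star$.
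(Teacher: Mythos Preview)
Your converse has a genuine gap in what you call the out-of-window case. The change-of-measure bound together with the counting observation $\sum_t\pr_0(E_{m,t})\le N$ correctly shows that the probability of stopping \emph{inside} $[\nu,\nu+N-1]$ and decoding correctly vanishes when $\alpha>\alpha^\star$. But a decoder is free to wait until time $A+N-1$, observe the entire sequence, and MAP-decode $m$ by marginalizing over $\nu$; such a decoder never stops in-window, so your Part~1 says nothing about it, and it is at least as good as any sequential rule for the purpose of the converse. Your Part~2 does not close this: the bound $I((m,\nu);Y)\le N\alpha^\star$ is correct, but Fano applied to reliable decoding of $m$ alone yields only $\ln M\lesssim N\alpha^\star$, not $\ln M+\ln A\lesssim N\alpha^\star$. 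To get the $\ln A$ term you would need the decoder to also produce a reliable estimate of $\nu$, and nothing in the problem forces that --- the assertion that ``reliable identification of $m$ forces an implicit reliable estimate of $\nu$'' is precisely the heart of the converse and cannot be assumed. Neither proposed patch works: noise after $\nu+N-1$ is indeed uninformative about $m$, but the decoder has already observed the codeword and may use it without ever committing to a $\hat\nu$; and Fano on the pair $(\phi,\tau)$ still penalizes only $H(m)$, since $\tau$ need not track $\nu$. The paper's converse avoids this difficulty by a completely different route: it grants the decoder the full sequence $Y_1^{A+N-1}$ and a slot genie, reduces to a two-message MAP test, and shows via a random-walk/type-class argument that the MAP statistic crosses its decision boundary with probability bounded away from zero whenever $\alpha>\alpha^\star$.

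Your achievability is different from the paper's and essentially sound. The paper uses a random i.i.d.\ codebook with a single sequential test that jointly checks an empirical-divergence condition (distance of the observed block from $Q(\cdot|\star)$) and an empirical mutual-information condition (match to some codeword); you instead separate detection from decoding via a preamble. One point to tighten: when the sliding window $[n-K+1,n]$ partially overlaps the true preamble $[\nu,\nu+K-1]$ with $n<\nu+K-1$, the statistic $S_n$ is not governed by $\pr_0$, so your Markov/union bound does not apply there, and an early crossing misaligns the payload. This is fixable --- the expected misalignment is $O(\eta K)$ and can be absorbed into the rate, or one can refine the start estimate by a local maximum-likelihood search after the first crossing --- but as written it is a loose end.
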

\noindent The theorem says that vanishing error probability can be achieved as
the blocklength $N$ tends to infinity if the
asynchronism level grows as $e^{N\alpha}$ where $\alpha<
D(Q(\cdot|x)||Q(\cdot|\star))$. Conversely, any coding strategy that operates at an
asynchronism exponent  $\alpha>
D(Q(\cdot|x)||Q(\cdot|\star))$ cannot achieve arbitrary low error probability. 
The second part of the theorem shows the distinction between the delay measured by the codeword length $N$ and by the expected 
`reaction time' $\ex(\tau-\nu)^+$. Arbitrary closely to the synchronization
threshold one can (asymptotically) guarantee $\ln M/N$ to be strictly positive,
while the question remains open for the rate $\ln M/\ex(\tau-\nu)^+$.
Specifically, it
remains to be seen whether $\alpha(Q)=\lim_{R\downarrow 0}\alpha(R,Q)$ (assuming
$\alpha(Q)<\infty$). This issue will be discussed in Section \ref{rate0}.

At least some connections between channel capacity and synchronization threshold exist. Although these two quantities are not directly
related, both refer to limits on hypothesis discrimination. The first concerns a
purely isolation problem whereas the second concerns an almost purely
detection problem (since there is no rate constraint). 
It may be interesting to note that the
synchronization threshold $\alpha(Q)$ is always at least as large as $C(Q)$.
To see this let $P$ be the capacity achieving distribution of the (synchronized) channel $Q$. It is well known \cite[Lemma 13.8.1]{CT} that for any distribution $V$ on $\cal{Y}$
$$D(PQ||PP_Y)\leq D(PQ||PV) $$
where $P_Y$ is the right marginal of $PQ=P(\cdot)Q(\cdot|\cdot)$.
Letting $V=Q(\cdot|\star)$ we get
\begin{align*}
C(Q)&\triangleq D(PQ(\cdot|\cdot)||PP_Y)\\
&\leq  D(PQ(\cdot|\cdot)||PQ(\cdot|\star))\\
&=\sum_x P(x)\sum_yQ(y|x)\ln \frac{Q(y|x)}{Q(y|\star)}\\
&\leq \max_x D(Q(\cdot|x)||Q(\cdot|\star))\\
&=\alpha(Q)
\end{align*}
Finally it can be checked that if $C(Q)=0$ then $\alpha(Q)=0$.

\subsection*{Example: the Gaussian channel} \noindent As an application of
Theorem \ref{unow} we consider antipodal signaling over a Gaussian channel and
derive a necessary condition between asynchronism level, block length, and
signal to noise ratio (SNR) for achieving reliable communication. Suppose
communication takes place over an additive channel $X\rightarrow Y=X+Z$ where
$X$ denotes the input, $Y$ the output, and where $Z$ is a normally distributed
random variable,  independent of $X$, with zero mean and unit variance.
We consider antipodal signaling, that is $c_i(m)=\pm \sqrt{\text{SNR}}$ for
all $i\in \{1,2,\ldots,N\}$ and $m\in \{1,\ldots,M\}$, where the SNR is some
positive constant. Before decoding, the receiver makes a hard decision on each
received symbol and declares $+1$ if $Y_i\geq 0$ and $-1$ if $Y_i<0$. The noise
symbol $\star$ equals zero meaning that when no information is sent the
receiver declares $+1$ or $-1$ with probability $1/2$.  The inputs
$+\sqrt{\text{SNR}}$ and $-\sqrt{\text{SNR}}$ are received correctly with
probability $1-\varepsilon$  and are flipped with probability $\varepsilon$,
where $\displaystyle \varepsilon=e^{-\frac{\text{SNR}}{2}(1+o(1))}$ as the SNR
tends to infinity. The discrete channel $Q$ that results from the hard
decision procedure is depicted in Fig.~\ref{gaussian}.  \begin{figure}
\begin{center} \input{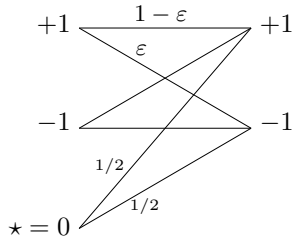} \caption{\label{gaussian} Antipodal signaling over a
 Gaussian channel with hard decision
at the decoder.} \end{center} \end{figure} From Theorem \ref{unow}, any coding
strategy that yields vanishing error probability satisfies
$\limsup_{N\rightarrow \infty}1/N \ln A \leq \alpha(Q)$ where  \begin{align*}
\alpha(Q)&=\max_x D(Q(\cdot|x)||Q(\cdot|\star))\\ &=\ln 2 - H(\varepsilon)\\
&=\ln 2 -H(e^{-\frac{\text{SNR}}{2}(1+o(1))}) \quad \text{as SNR
}\rightarrow\infty \end{align*} with $H(\varepsilon)\triangleq -\varepsilon \ln
\varepsilon -(1-\varepsilon)\ln (1-\varepsilon)$. Therefore, as $N$ tends to
infinity, in order to achieve reliable communication it is necessary that
$$\frac{1}{N} \ln A \leq \ln 2 -H(e^{-\frac{\text{SNR}}{2}(1+o_1(1))})+o_2(1)$$
where $o_1(1)$ and $o_2(1)$ are vanishing functions of the SNR and of $N$,
respectively. Because of the chosen quantization, in the limit of high SNR we
have $\frac{1}{N} \ln A \overset{\sim}{\leq} \ln 2$, and an increase in the
power results in a negligible increase of the asynchronism level for which
reliable communication is possible  (for fixed blocklength). To exploit power
at high SNR it is necessary to have a finer quantization at the output. Finally
notice that for this (quantized) channel the synchronization threshold
coincides with the channel capacity.\hfill{$\square$}

While we do not characterize the asynchronism exponent function $\alpha(R,Q)$ for $R>0$, Theorem \ref{ra}
provides a non trivial lower bound characterization of $\alpha(R,Q)$, for any
$R\in [0,C(Q))$.

We use the
notation $(PQ)_Y$ to denote the right marginal of a joint distribution
$P(\cdot)Q(\cdot|\cdot)$ and, given a joint distribution $J$ on
${\cal{X}}\times {\cal{Y}}$ we denote by $I(J)$ the mutual information induced
by $J$. Also we denote by ${\cal{P}}^{{\cal{Y}}|{\cal{X}}}$ the set of
conditional distributions of the form $V(y|x)$ with $x\in {\cal{X}}$ and $y\in {\cal{Y}}$.

\begin{thm}\label{ra}
Let $Q$ be a discrete memoryless channel. If for some constants $\alpha\geq 0$,
$t_1\geq 0$, $t_2>1$, and input distribution $P$, with $I(PQ)>0$, the following inequalities
\begin{align*}
a.\quad &\alpha <\inf_{\substack{V\in {\cal{P}}^{{\cal{Y}}|{\cal{X}}}\\
D((PV)_Y||Q(\cdot|\star))<
\frac{t_1\alpha}{\delta(t_1+t_2-1)}}}D((PV)_Y||(PQ)_Y)\nonumber \\
b.\quad &\alpha <\min_{\substack{V
\in {\cal{P}}^{{\cal{Y}}|{\cal{X}}}\\ I(PV)\leq \frac{t_2\alpha}{\delta(t_1+t_2-1)}}} D(PV||PQ)\nonumber \\
c.\quad &\frac{t_1}{t_2}<
\frac{D((PQ)_Y||Q(\cdot|\star))}{I(PQ)}
\end{align*}
are satisfied for some $\delta \in (0,1)$, then the rate $I(PQ)/t_2$ is achievable
at an asynchronism exponent $\alpha$.
\end{thm}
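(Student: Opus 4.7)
The plan is to establish achievability by random coding combined with a sequential sliding-window decoder that applies two threshold tests on the empirical joint type. Fix $\eta>0$ small, set $R=I(PQ)/t_2-\eta$, and draw $M=\lceil e^{NR}\rceil$ codewords of length $N$ independently from (a type approximation of) the input distribution $P$. With thresholds $\beta_1=t_1\alpha/[\delta(t_1+t_2-1)]$ and $\beta_2=t_2\alpha/[\delta(t_1+t_2-1)]$, the decoder slides a window of length $N$ and, at each time $n\geq N$, computes the empirical joint type $\hat J_{m,n}$ of $(c^N(m),Y_{n-N+1}^n)$ for each message $m$, stopping at $\tau=n$ and declaring $m$ the first time some $m$ simultaneously satisfies the \emph{detection test} $D((\hat J_{m,n})_Y\,\|\,Q(\cdot|\star))\geq\beta_1$ and the \emph{isolation test} $I(\hat J_{m,n})\geq\beta_2$.

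The error analysis decomposes into three events. For a \emph{false alarm} (the decoder fires during a pure-noise window), the key identity is the chain-rule-style decomposition
\[
D(PV\,\|\,P\cdot Q(\cdot|\star))\;=\;I(PV)+D((PV)_Y\,\|\,Q(\cdot|\star)),
\]
which makes the per-codeword, per-window Sanov exponent at least $\beta_1+\beta_2$ on the triggering set. A union bound over $M$ codewords and $A=e^{N\alpha}$ windows keeps the false-alarm probability vanishing provided $R+\alpha<\beta_1+\beta_2$; substituting $R=I(PQ)/t_2$ and the given $\beta_1,\beta_2$ this reduces to $\delta<1$, which holds. For the \emph{missed-detection} event (the true message fails to trigger at $n=\nu+N-1$), the empirical conditional type $V$ violating the detection or the isolation test has probability $\leq e^{-ND((PV)_Y\,\|\,(PQ)_Y)}$ or $\leq e^{-ND(PV\,\|\,PQ)}$ respectively by Sanov under the true distribution $PQ$; hypotheses (a) and (b) bound both exponents below by $\alpha$, giving missed-detection probability $\leq e^{-N\alpha}=o(1)$. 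For the \emph{mis-isolation} event, a wrong codeword $c^N(m')$ is independent of the output, so that conditioning on $Y$ and using a type-counting (packing-lemma) calculation yields the bound $\poly(N)\cdot M\cdot e^{-N\beta_2}$, which vanishes since $R<\beta_2$ by construction.

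Condition (c) is precisely the compatibility requirement that the true joint type $PQ$ strictly satisfies both thresholds, i.e., $I(PQ)>\beta_2$ and $D((PQ)_Y\,\|\,Q(\cdot|\star))>\beta_1$, so that the decoder's triggering region is non-vacuously populated by the correct codeword; this is exactly the form $t_1/t_2<D((PQ)_Y\,\|\,Q(\cdot|\star))/I(PQ)$ once the definitions of $\beta_1,\beta_2$ and the feasibility of $\delta\in(0,1)$ are substituted. Conditional on no missed detection and no mis-isolation, $\tau=\nu+N-1$ with high probability, yielding $\ex(\tau-\nu)^+=N(1+o(1))$ and achieved rate $\ln M/\ex(\tau-\nu)^+\to I(PQ)/t_2-\eta$; letting $\eta\downarrow 0$ completes achievability.

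The principal obstacle is balancing the two decoder thresholds simultaneously. They must be small enough (relative to $D((PQ)_Y\,\|\,Q(\cdot|\star))$ and $I(PQ)$) for the true codeword to trigger reliably, yet large enough that $\beta_1+\beta_2$ dominates $R+\alpha$ across the exponentially long false-alarm horizon $e^{N\alpha}$. The auxiliary parameter $\delta\in(0,1)$ supplies the slack needed to reconcile these opposing pressures, and hypotheses (a), (b), (c) encode, respectively, the missed-detection exponent, the mis-isolation/feasibility exponent, and the compatibility of the two thresholds. A secondary technical point is that sliding windows overlap, which is handled by a standard blocking argument (grouping into disjoint length-$N$ phases) that only inflates bounds by polynomial prefactors.
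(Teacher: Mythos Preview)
Your proof has a genuine gap that stems from the choice of a fixed-length sliding window decoder. By taking windows of length exactly $N$ and arguing that the delay is $N(1+o(1))$, you force the rate to equal $\ln M/N$. But then your false-alarm and mis-isolation bounds require, respectively, $R+\alpha<\beta_1+\beta_2$ and $R<\beta_2$. Substituting $R=I(PQ)/t_2$ and $\beta_i=t_i\alpha/[\delta(t_1+t_2-1)]$, neither of these reduces to $\delta<1$ as you claim: the first becomes
\[
\frac{I(PQ)}{t_2}<\alpha\,\frac{(t_1+t_2)-\delta(t_1+t_2-1)}{\delta(t_1+t_2-1)},
\]
and the second becomes $I(PQ)\,\delta(t_1+t_2-1)<t_2^{2}\alpha$. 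These are additional constraints that are \emph{not} implied by hypotheses (a), (b), (c), so the argument does not establish the theorem as stated. Your reading of (c) as the compatibility condition ``$I(PQ)>\beta_2$ and $D((PQ)_Y||Q(\cdot|\star))>\beta_1$'' is also incorrect: (c) only fixes the ratio $\beta_1/\beta_2$, not the individual levels.

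The paper avoids this obstacle by using a \emph{variable}-length window decoder (windows of every length $i\in\{1,\dots,N\}$, with absolute thresholds $t_1\ln M$ and $t_2\ln M$), which decouples the blocklength from the delay. The correct codeword typically triggers at a window of length $j\approx t_2\ln M/I(PQ)$, well before $N$, so the rate is $\ln M/j\to I(PQ)/t_2$ regardless of $\ln M/N$. The ratio $\ln M/N$ is then chosen separately (the paper's condition $d$, $\ln M/N\ge\alpha/[\delta(t_1+t_2-1)]$) to make the false-alarm union bound over $A=e^{N\alpha}$ windows and $M$ codewords vanish; the extra length $N-j$ serves only to push the miss probability at time $N$ below $e^{-N\alpha}$, so that the tail contribution $A\cdot\pr(\tau_m\ge l+N)$ to the expected delay is $o(1)$. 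Condition (c) arises in this framework as the requirement that the true codeword pass the divergence test already at window length $j$, i.e., $t_1\ln M/j\approx t_1 I(PQ)/t_2<D((PQ)_Y||Q(\cdot|\star))$, which is exactly $t_1/t_2<D((PQ)_Y||Q(\cdot|\star))/I(PQ)$.
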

\noindent Note that the conditions $a$ and $b$ in Theorem \ref{ra} are easy to check
numerically since they only involve convex optimizations. 
Also notice, on the right hand side of the inequality $b$, the sphere packing exponent
function --- of the channel $Q$ with input distribution $P$ --- 
evaluated at $\frac{t_2\alpha}{\delta(t_1+t_2-1)}$ (see \cite[p.166]{CK}).
\begin{cor}
For any channel $Q$ with capacity
$C(Q)>0$, any rate $R\in (0,C(Q))$ can be achieved at a strictly positive
asynchronism exponent.
\end{cor}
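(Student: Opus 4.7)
The plan is to apply Theorem~\ref{ra} with a carefully chosen input distribution $P$ and the parameter choices $t_1 = 0$, $t_2 = I(PQ)/R > 1$, and $\delta = 1/2$. The resulting achievable rate $I(PQ)/t_2$ equals $R$, and with $t_1=0$ two of the three conditions become essentially trivial, leaving only condition $b$ to verify for some small $\alpha > 0$.

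First, I would construct $P$ satisfying $I(PQ) > R$ \emph{and} $(PQ)_Y \neq Q(\cdot|\star)$. Let $P^*$ be the capacity-achieving distribution, so $I(P^* Q) = C(Q) > R$. If $(P^* Q)_Y \neq Q(\cdot|\star)$, take $P = P^*$. Otherwise, because $C(Q) > 0$ forbids $Q(\cdot|x) = Q(\cdot|\star)$ for all inputs $x$, I can pick $x_0$ with $Q(\cdot|x_0) \neq Q(\cdot|\star)$ and set $P = (1-\eta) P^* + \eta P_0$, where $P_0$ concentrates on $x_0$. By continuity of mutual information in the input distribution, $I(PQ) > R$ for all sufficiently small $\eta > 0$, while $(PQ)_Y = (1-\eta) Q(\cdot|\star) + \eta Q(\cdot|x_0) \neq Q(\cdot|\star)$. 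Either way, $D((PQ)_Y||Q(\cdot|\star)) > 0$.

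With $t_1 = 0$, condition $a$ requires $D((PV)_Y||Q(\cdot|\star)) < 0$ for feasibility, which is impossible, so the infimum is $+\infty$ and the condition is vacuous; condition $c$ reduces to $0 < D((PQ)_Y||Q(\cdot|\star))/I(PQ)$, which holds by construction. Condition $b$ reads
\begin{equation*}
\alpha < \min_{V \,:\, I(PV) \leq \frac{t_2 \alpha}{\delta(t_2 - 1)}} D(PV||PQ).
\end{equation*}
At $\alpha = 0$ the feasible set forces $V(\cdot|x)$ to equal a single output distribution $W$, and a direct Lagrange computation yields
\begin{equation*}
\min_W \sum_x P(x)\, D(W||Q(\cdot|x)) = -\ln \sum_y \prod_x Q(y|x)^{P(x)}.
\end{equation*}
By strict AM--GM applied term-by-term in $y$, the right-hand side is strictly less than $\sum_y (PQ)_Y(y) = 1$ whenever $I(PQ) > 0$, so this minimum is strictly positive. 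Standard continuity of the minimum in $\alpha$ (from compactness of the conditional-distribution simplex and continuity of both $D(PV||PQ)$ and $I(PV)$ in $V$) then gives condition $b$ for all sufficiently small $\alpha > 0$, and Theorem~\ref{ra} delivers rate $I(PQ)/t_2 = R$ at asynchronism exponent $\alpha > 0$, as required.

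The main obstacle is the strict positivity at $\alpha = 0$ in condition $b$: identifying the minimum in closed form and reducing its positivity to strict AM--GM is the key observation. The perturbation argument constructing $P$ is also slightly delicate in that one must simultaneously preserve $I(PQ) > R$ and produce $(PQ)_Y \neq Q(\cdot|\star)$, but both steps reduce to standard continuity.
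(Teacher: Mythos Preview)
Your proof is correct and takes essentially the same approach as the paper: set $t_1 = 0$, choose $P$ with $I(PQ) > R$ and $(PQ)_Y \neq Q(\cdot|\star)$, observe that conditions $a$ and $c$ become trivial, and verify condition $b$ for small $\alpha > 0$ by continuity from its strictly positive value at $\alpha = 0$. The paper dispatches this last point more directly---if $I(PV) = 0 < I(PQ)$ then $PV \neq PQ$, hence $D(PV||PQ) > 0$, and compactness of $\{V : I(PV) = 0\}$ yields a positive minimum---so your explicit closed form $-\ln \sum_y \prod_x Q(y|x)^{P(x)}$ and the AM--GM argument, while correct, are more than is needed.
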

\begin{proof}[Proof of the Corollary]
Consider the inequalities $a$, $b$, and $c$ from Theorem \ref{ra}. First
choose some $P$ and $t_2>1$ so that $I(PQ)/t_2\geq R$ and so that $(PQ)_Y\ne Q(\cdot|\star)$
(this is always possible since $C(Q)>0$). By setting $t_1=0$ the inequality $c$ holds (since
its right hand side is strictly positive). Also inequality $a$ holds for any finite $\alpha$
(the infimum equals infinity). For the inequality $b$, observe that its right hand side is a
decreasing function of $\alpha$ and has a strictly positive value at $\alpha =0$ (since
$I(PQ)>0$). It follows that inequality $b$ holds for strictly positive and small enough values
of $\alpha$.  
\end{proof}

\subsection{Coding for asynchronous channels}\label{achievability}
In this section we present the coding scheme from which one deduces
Theorem \ref{ra} and the direct part of Theorem \ref{unow}.
As we will see, our scheme does not subdivide the
synchronization problem into a detection problem followed by a message
isolation problem: detection and isolation are treated jointly.

The codebook is randomly generated according to some distribution $P$. If the
aim is only to reliably communicate at a certain asynchronism exponent
$\alpha$, there is some degrees of freedom in choosing $P$. One possible
choice is to pick a $P$ that satisfies $$D((PQ)_Y||Q(\cdot|\star))+I(PQ)-\ln
M/N > \alpha$$ with $D((PQ)_Y||Q(\cdot|\star))>0$ and $I(PQ)>0$, where $M$ represents the size of the message set
and $N$ the size of the codewords (see proof of Proposition \ref{exca}).  In the regime
where the asynchronism exponent is close to $\alpha(Q)$ the codewords are
mainly composed of the symbol  $\arg\max_x D(Q(\cdot|x)||Q(\cdot|\star))$. 
Indeed, in this asynchronism regime, the main source of error comes from a miss
detection of the sent codeword, later referred to as `false-alarm.' We deal with this source of error by
distillating information using codewords with (mostly) symbols that induce
output distributions that are `as far as possible' from the output distribution
induced by the $\star$ symbol. Finally if the aim is to accommodate both rate and
asynchronism constraints, the distribution $P$ has to satisfy the conditions
explicitly stated in Theorem \ref{ra}.

For the decoder, let us observe first that our communication model admits two sources of error. The first
comes from an atypical behavior of the noise during the period when no information is conveyed, which may
result in a false-alarm. The second comes from an atypical behavior of the channel during information
transmission, which may result in a miss-isolation of the sent codeword. These two sources of error
depend on the asynchronism level as well as on the communication rate: the higher the asynchronism the
higher the first source of error, the higher the communication rate the higher the second source of
error. Accordingly, our decoder is the combination of two criteria parameterized by constants that are
chosen based on the level of asynchronism and according to the rate we aim at.

More specifically,  the decoder observes the channel outputs $Y_1,Y_2,\ldots$  and makes a decision
as soon as it observes $i$ consecutive output symbols, with $i\in [1,2,\ldots,N]$,
that simultaneously satisfy two conditions. The first condition is that these
symbols should look `sufficiently different' from the noise, as measured by
the divergence. The second condition is that these symbols
must be sufficiently correlated, in a mutual information sense, with one of the
codewords. We formalize this below.

For $j\geq i$ we write $x_i^j$ for $x_i,x_{i+1},\ldots,x_j$. If $i=1$ we use the shorthand
notation $x^j$ instead of $x_i^j$. Given a pair $(x^n,y^n)$ let us denote by $\hat{P}_{(x^n,y^n)}$ the
empirical distribution of $(x^n,y^n)$, i.e., $\hat{P}_{(x^n,y^n)}(x,y)=\frac{1}{n}\sum_{i=1}^n
\openone_{(x,y)}(x_i,y_i)$ where $\openone_{(x,y)}(x_i,y_i)=1$ if $(x_i,y_i)=(x,y)$, else equals zero. To
each message $m\in [1,2,\ldots,M]$ associate the stopping time\footnote{\label{fosnote}It may seem to the
reader that the mutual information condition in \eqref{tdecod} given by \begin{align}\label{fsg5}
\min_{k\in [1,\ldots,i]} \left[k I(\hat{P}_{c^k(m),y_{n-i+1}^{n-i+k}}) +(i-k)
I(\hat{P}_{c_{k+1}^i(m),y_{n-i+k+1}^n})\right] \geq t_2 \ln M \Bigg\} \end{align}is convoluted, and that
it could be replaced, for instance, by \begin{align}\label{fsg6} i I(\hat{P}_{c^i(m),y_{n-i+1}^{n}}) \geq
t_2 \ln M \;. \end{align} Our choice is motivated by a technical consideration related to the
false-alarm event induced by $i$ last symbols that are generated partly inside and partly outside the transmission
period (see Case II of the proof of Lemma \ref{false-alarm}). }

\begin{align}\label{tdecod}
&\tau_m=\inf\Bigg\{n\geq 1\;:\; \exists  i\in \{1,\ldots,N\} \:\: \text{so that}
\;iD(\hat{P}_{Y_{n-i+1}^n}||Q(\cdot|\star))\geq t_1 \ln M \:\text{and}\:\nonumber \\
&\min_{k\in [1,\ldots,i]} \left[k I(\hat{P}_{c^k(m),y_{n-i+1}^{n-i+k}})
+(i-k) I(\hat{P}_{c_{k+1}^i(m),y_{n-i+k+1}^n})\right]
\geq t_2 \ln M \Bigg\}
\end{align}
where $t_1\geq 0$ and $t_2>1$ are some fixed threshold constants to be appropriately
chosen according to the asynchronism level and desired communication rate.
The decoding is made at time $$\tau=\min_{m\in [1,2,\ldots,M]}\tau_m$$ and the
message $\bar{m}$ that is declared is any that satisfies
$\tau_{\bar{m}}=\tau$.

It should be emphasized that there may be other sequential decoders that also
achieve the synchronization threshold. The one we propose has the property that
it also allows for communication at positive rates and positive asynchronism
exponents. Also, an interesting feature of the above decoder is that, in
addition to operating in an asynchronous setting, it is also almost universal
in the sense that its rule does not depend of the channel statistics, except
for the noise distribution $Q(\cdot|\star)$. In fact this decoder is an
extension of a sequential universal decoder introduced in \cite[eq.~(10)]{TT2}
for the synchronized setting.

In the context of asynchronous communication, the same decoding
rule as above is considered in \cite{TKW}, but without the
divergence condition, i.e., a decision is made as soon as for some $m$ and $i$ the condition
$$\min_{k\in [1,\ldots,i]}
\left[k I(\hat{P}_{c^k(m),y_{n-i+1}^{n-i+k}}) +(i-k)
I(\hat{P}_{c_{k+1}^i(m),y_{n-i+k+1}^n})\right] \geq t_2 \ln M \Bigg\}$$
holds. With the mutual information condition alone, however, it
was not possible to prove that reliable
communication can be achieved for asynchronism exponents higher than the capacity of the channel.

\subsection{Continuity of $\alpha(\cdot,Q)$ at $R=0$}\label{rate0}
We discuss the continuity of $\alpha(\cdot,Q)$ at $R=0$ in light of Theorem
\ref{ra}. The right hand side of inequality $b$, the sphere packing bound, is
associated to the miss-isolation error event of the sent codeword
associated with the coding scheme discussed in \ref{achievability} (this will be
seen in the proof of Theorem \ref{ra}). Therefore, regardless of the rate, any achievable synchronization 
exponent $\alpha$ obtained via Theorem \ref{ra} is bounded by the sphere packing exponent
at zero rate, which can
be smaller than the synchronization threshold (see Fig.~\ref{bscstar} for an
example). This motivates the
conjecture that $\alpha(Q,0)\ne\lim_{R\downarrow 0}\alpha (Q,R)$ in general.
\begin{figure}
\begin{center} \input{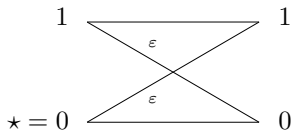} \caption{\label{bscstar} A binary symmetric channel
has a sphere packing bound at zero rate,
$E_{sp}(R=0,Q)$ given by $\max_P\min_{V:I(PV)=0}D(PV||PQ)$, that
can be smaller compared to $\alpha(Q)$. Specifically, Theorem \ref{unow} yields
$\alpha(Q)=\varepsilon \ln [\varepsilon/(1-\varepsilon)]+ (1-\varepsilon)\ln
[(1-\varepsilon)/\varepsilon]$ and it can be checked that
$E_{sp}(R=0,Q)\leq 0.5 \ln [0.5/(1-\varepsilon)]+0.5 \ln [0.5/\varepsilon]$.
Therefore $E_{sp}(R=0,Q)\leq 0.5 (1+o(1))\alpha(Q)$ as $\varepsilon \rightarrow 0$.} \end{center} 
\end{figure}

Note that there are channels for which the asynchronism exponent function is continuous at zero
rate, such as the one given in Fig.~\ref{eeszi}. 
\begin{figure}
\begin{center} \input{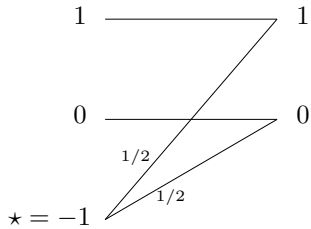} \caption{\label{eeszi} Example of a channel for
which $\alpha(Q,0)=\lim_{R\downarrow 0}\alpha (Q,R)$.} \end{center} 
\end{figure}
Indeed, in this case $\alpha(Q)=\ln 2$ by Theorem \ref{unow}. Then,
considering the three inequalities given in Theorem \ref{ra}, let $t_1=0$ and let the input distribution $P$ be
defined as $P(1)=p=1-P(0)$ for some fixed $p\in (0,1/2)$. With this choice of $t_1$
and $P$ the inequality $a$ holds for any finite $\alpha$ (the infimum is infinite) and inequality $c$ holds
for any $t_2>1$ since its right hand side is strictly positive. We now focus on the inequality $b$. Observe
that any channel $V\ne Q$ with inputs $0$ and $1$ gives $D(PV||PQ)=+\infty$.  Therefore, for any $\delta\in
(0,1)$ and $t_2>1$ the right hand side of the inequality $b$ is infinite if $Q$ satisfies 
\begin{align}\label{yap1}
\frac{t_2\alpha}{\delta(t_2-1)}<I(PQ)\;,
\end{align}
and zero otherwise. Now pick an arbitrarily small $\mu>0$ and choose $P$ with $p$ sufficiently close to $1/2$ so that 
\begin{align}\label{yap2}
I(PQ)\geq \alpha (Q)-\mu/2\;.
\end{align}
We conclude from \eqref{yap1} and \eqref{yap2} that, by choosing $\delta$ close
enough to one and $t_2$ large enough,
any asynchronism exponent
$$\alpha\leq \alpha(Q)-\mu$$
can be achieved at all rates up to $I(PQ)/t_2$.

\section{Analysis}\label{analysis}
 In this section we prove the converse and
the direct part of Theorem \ref{unow}. The converse shows that no coding
strategy achieves vanishing error probability while operating at an
asynchronism exponent higher than $\alpha(Q)$. For the direct part we show that
the coding scheme proposed in Section \ref{achievability} can reliably operate
arbitrarily closely to the asynchronism exponent $\alpha(Q)$. By extending the
analysis of this scheme we will prove Theorem \ref{ra}. The difference between
the achievability schemes of Theorem \ref{unow} and \ref{ra} lies in the
codebooks. For Theorem \ref{unow} the codebook is randomly generated according
to a certain distribution $P$, while for Theorem \ref{ra} we impose that each codeword is (essentially) of constant composition
$P$ uniformly over its length. 

\begin{prop}[Converse]\label{converse} Suppose that $Q(y|\star)>0$ for all $y
\in {\cal{Y}}$. Then no coding strategy achieves an asynchronism exponent strictly
greater than $$\max_{x \in \cal{X}} D(Q(\cdot|x)||Q(\cdot|\star))\;.$$
\end{prop}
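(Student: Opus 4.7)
The plan is to combine a change-of-measure between $\pr_{m,l}$ and $\pr_\star$ with a disjointness-in-time argument. Suppose for contradiction some coding scheme with asynchronism $A = e^{N\alpha}$, for $\alpha > \alpha^*:=\max_x D(Q(\cdot|x)\|Q(\cdot|\star))$, achieves vanishing error probability $\varepsilon$; fix $\delta$ with $0<3\delta < \alpha - \alpha^*$. On the event $\{\tau \geq l+N-1\}$ the Radon--Nikodym derivative of $\pr_{m,l}$ with respect to $\pr_\star$ on ${\cal F}_\tau$ equals $L_{m,l}:=\prod_{i=l}^{l+N-1} Q(Y_i|c_{i-l+1}(m))/Q(Y_i|\star)$, whose logarithm has $\pr_{m,l}$-mean $\sum_{i=1}^N D(Q(\cdot|c_i(m))\|Q(\cdot|\star))\leq N\alpha^*$. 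Because $Q(y|\star)>0$ for all $y$, the log-increments are bounded, so Hoeffding gives $\pr_{m,l}(\log L_{m,l}>N(\alpha^*+\delta))=o(1)$, and hence for every $F\in{\cal F}_\tau$ contained in $\{\tau\geq l+N-1\}$,
\[
\pr_{m,l}(F)\leq e^{N(\alpha^*+\delta)}\pr_\star(F)+o(1).
\]

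Applied naively to $F=\{\phi=m\}$ this only yields an $M$-bound ($M\leq e^{N(\alpha^*+\delta)}$), so the key is to produce events that are disjoint in $l$. First, under $\pr_{m,l}$ on $\{\tau<l\}$ the decoder's output is a function of pure noise; averaged over $m$ the conditional success probability is $1/M\leq 1/2$, and so smallness of the total error forces $\tfrac{1}{A}\sum_{l=1}^A\pr_\star(\tau<l)\leq 2\varepsilon$, i.e. the decoder a.s. does not stop early under $\pr_\star$. Pick now a subset $L\subseteq[1,A]$ of $|L|=A/K$ positions spaced by $K$ (with $K$ to be chosen subexponential in $N$), set $J_l:=[l+N-1,\,l+N-2+K]$ (disjoint across $l\in L$), and let $F_{m,l}:=\{\phi=m,\,\tau\in J_l\}$. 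Provided one can establish the ``timely-stopping'' bound $\pr_{m,l}(F_{m,l})\geq 1-o(1)$ for most $l\in L$, the change-of-measure inequality gives $\pr_\star(F_{m,l})\geq e^{-N(\alpha^*+\delta)}(1-o(1))$ for those $l$, and disjointness together with $\pr_\star(\phi=m)\leq 1$ yields
\[
\frac{A}{K}\,e^{-N(\alpha^*+\delta)}(1-o(1))\leq 1,
\]
whence $\alpha\leq \alpha^*+\delta+(\log K)/N+o(1)\leq \alpha^*+2\delta$, contradicting $\alpha>\alpha^*+3\delta$.

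The main obstacle is the timely-stopping claim: although a reliable decoder must stop a.s., its expected reaction time $\ex(\tau-\nu)^+$ is only constrained to be finite (the rate-zero setup imposes no quantitative bound), so $\tau$ is not a priori concentrated in a subexponential window around $l+N-1$. My plan is to first truncate the decoder at a horizon $\Theta$ chosen so that $\pr(\tau>\Theta)=o(1)$ and then take $K\asymp\Theta$; this route succeeds provided the false-alarm bound above can be combined with the stopping property of $\tau$ to show that $\Theta$ may be taken subexponential in $N$ without destroying reliability. Should that fail, a backup is a second-moment refinement exploiting that once $A$ substantially exceeds $e^{N\alpha^*}$ the averaged likelihood $A^{-1}\sum_l L_{m,l}$ concentrates near $1$ under $\pr_\star$, forcing the decoder's output to become asymptotically independent of which $l$ produced the signal and hence useless for recovering $m$.
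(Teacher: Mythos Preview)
Your change-of-measure setup is clean and the inequality $\pr_{m,l}(F)\le e^{N(\alpha^*+\delta)}\pr_\star(F)+o(1)$ is correct for $F\subset\{\tau\ge l+N-1\}$. The genuine gap is exactly where you flag it: the timely-stopping claim $\pr_{m,l}(\tau\in J_l)\to 1$ cannot be established in this model, and without it the disjointness argument collapses. The synchronization threshold is defined at $R=0$, so there is \emph{no} delay constraint whatsoever on the decoder. A decoder that simply waits until time $A+N-1$, looks at the entire sequence $Y_1^{A+N-1}$, and then applies MAP is a perfectly legitimate competitor; for it $\tau\equiv A+N-1$, so $\pr_{m,l}(\tau\in J_l)=0$ for all but a vanishing fraction of $l$, and your sum $\sum_{l\in L}\pr_\star(F_{m,l})$ is bounded by $1$ trivially rather than because $A$ is small. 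Your truncation fix does not help: for this decoder the required horizon $\Theta$ is $A+N-1$, which is exponential in $N$, so $(\log K)/N\sim\alpha$ and the inequality becomes vacuous. More generally, nothing in the problem forces the decoder to commit near the transmission window; it can always postpone to the end at zero cost in error probability.

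The second-moment backup also does not reach the correct threshold. Under $\pr_\star$ one has $\ex_\star[L_{m,l}^2]=\prod_{i=1}^N\sum_y Q(y|c_i(m))^2/Q(y|\star)=\exp\{\sum_i D_2(Q(\cdot|c_i(m))\|Q(\cdot|\star))\}$, the R\'enyi divergence of order $2$, which in general strictly exceeds $\alpha^*=\max_x D(Q(\cdot|x)\|Q(\cdot|\star))$. Chebyshev on the average of well-separated $L_{m,l}$'s therefore only yields concentration once $\alpha$ exceeds $\max_x D_2(Q(\cdot|x)\|Q(\cdot|\star))$, not $\alpha^*$; for channels where $D_2>D$ the gap $(\alpha^*,D_2^*)$ is left uncovered.

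The paper takes a quite different route that sidesteps the stopping-time issue entirely. It grants the decoder the full sequence $y_1^{A+N-1}$ (so $\tau$ plays no role), reduces by a genie to $M=2$, single-repeated-symbol codewords, and $\nu$ restricted to a slotted grid of $s\approx A/N$ positions. The MAP rule then compares a random walk $\sum_{l=1}^s z(Y^{(l)})$ to zero, and a type-counting argument (Lemma~\ref{bucket}) shows that when $s\ge e^{N(\alpha^*+\varepsilon)}$ this walk crosses the ``typical'' barrier $h$ with probability bounded away from zero, forcing $\pr(\E)\not\to 0$. If you want to rescue your approach, you would need to \emph{replace} the role of $\tau$ by an observable that genuinely localizes $l$ under $\pr_{m,l}$ (for instance, augmenting the decoder with an estimated start time and arguing that reliable message recovery implies reliable coarse localization of $\nu$); but that reduction is itself nontrivial and is essentially what the paper's random-walk analysis accomplishes directly.
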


\noindent Proposition \ref{converse} assumes that $Q(y|\star)>0$
for all $y \in {\cal{Y}}$. Indeed, if $Q({y}|\star)=0$ for some
${y}\in \cal{Y}$ it will shown in Proposition \ref{exca} that reliable communication can be
achieved irrespectively of the
exponential growth rate of the asynchronism level with respect to the
blocklength.

\begin{proof}[Proof of Proposition \ref{converse}]
Suppose there are two equally likely messages, $m$ and $m'$, and that the
decoder is given the sequence of maximal length $y_1,y_2,\ldots,y_{A+N-1}$. We make the
hypothesis that each codeword
$c(m)$ and $c(m')$ uses one symbol repeated $N$ times. The case where each codeword
uses
multiple symbols is obtained by a straightforward extension of the single symbol
case and is therefore omitted. 
Also, we optimistically assume that the receiver is cognizant of the fact
that the sent message is delivered during one of the $s$ distinct time slots of duration $N$,
where $s$ is the integer part of $(A+N-1)/N$, as shown in Fig.~\ref{graphees2}.
\begin{figure}
\begin{center}
\input{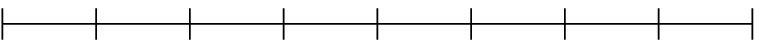}
\caption{\label{graphees2} Parsing of the received sequence of maximal length $A+N-1$
into $s$ blocks $y^{(1)},y^{(2)},\ldots,y^{(s)}$ of
length $N$, where $s$ is the integer part of $(A+N-1)/N$.}
\end{center}
\end{figure}
An easy computation shows that, given a sequence $y^{A+N-1}$, the maximum a posteriori
decoder declares message $m$ or $m'$ depending whether the sum
$$\sum_{l=1}^s z(y^{(l)})$$
is positive of negative,\footnote{If the sum is zero the decoder declares one of the two messages at
random.} with
\begin{align}\label{zhete}
z(y^{(l)})\triangleq \left[\frac{Q(y^{(l)}|c(m))}{Q(y^{(l)}|\star)}-\frac{Q(y^{(l)}|c(m'))}{Q(y^{(l)}|\star)}\right]
\end{align}
and where $Q(y^{(l)}|c(m))$ denotes the probability of the $l$\/th block
$y^{(l)}$ of size $N$ given the codeword $c(m)$, and where $Q(y^{(l)}|\star)$ refers to the
same probability now conditioned on the string of $N$ consecutive $\star$. The probability of the error event $\EuScript{E}$ is hence lower
bounded as
$$\pr({\EuScript{E}})\geq \frac{1}{2}\left[ \pr_m\left(\sum_{l=1}^s
z(Y^{(l)})<0\right)+\pr_{m'}\left(\sum_{l=1}^s z(Y^{(l)})>0\right)\right]$$ where
$\pr_m$ refers to the probability conditioned on message $m$ being
sent. Note that under $\pr_m$ and $\pr_{m'}$ the $z(Y^{(l)})$ are all i.i.d. according to the noise
distribution except for $z(Y^{(\nu)})$ whose distribution depends on the
sent message. 

 Let $T_{m}$ be the set of sequences $y^{N}$ that are
strongly typical with respect to $Q(\cdot|c(m))$ \cite[p.33]{CK}, i.e, any
sequence $y^N\in T_{m}$ satisfies $|n(y;y^N)/N-Q(y|c(m))|<\mu$ where
$n(y;y^N)$ is the number of times the symbol $y$ appears in $y^N$. We choose
the strong typicality constant $\mu$ to be so that $0<\mu\ll 1$ and the
blocklength $N$ large enough that $\pr_m(Y^{(\nu)}\in T_m)\geq 1-\mu$. We define
$T_{m'}$ analogously. Further, we define $h$ to be equal to $\max_{y^N\in T_m\cup
T_{m'}}|z(y^N)|$. Using the independence of $z(Y^{(\nu)})$ and
$\sum_{l\ne \nu}z(Y^{(l)})$ under $\pr_m$ we get
\begin{align*}
\pr_m\left(\sum_{l=1}^s z(Y^{(l)})<0\right)&\geq \pr_m\left(\{Y^{(\nu)}\in
T_m \}\cap \left\{ \sum_{l\ne \nu}z(Y^{(l)})<-h
\right\}\right) \\
&\geq (1-\mu) \pr \left( \sum_{l=1}^{s-1}z(Y^{(l)})<-h \right)\;.
\end{align*}
The sum in the argument of the last term above involves $s-1$
independent random variables distributed according to $Q(\cdot|\star)$. 
For simplicity from now on we denote these random variables by $Z_l$
instead of $z(Y^{(l)})$. We then deduce that
\begin{align}\label{quart}
\pr({\EuScript{E}})\geq \left(\frac{1-\mu}{2}\right)\pr\left( \big|\sum_{l=1
}^{s-1}Z_l\big|> h \right) \;.
\end{align}
In the remaining part of the proof we show that, if
$s=e^{(\alpha(Q)+\varepsilon)N }$, with $\varepsilon>0$, the random walk
$\sum_{i=1}^{s-1}Z_l$ crosses $h$ with finite probability as $N$ tends to
infinity, proving the Proposition. 
At the core of the argument lies the following Lemma whose proof is deferred to the Appendix.
\begin{lem}\label{bucket}
Let $P$ be a distribution over some finite alphabet
${\cal{A}}=\{a_1,a_2,\ldots,a_{|{\cal{A}}|}\}$ and suppose that for some
integer $s\geq 1$ $$\frac{3}{s\delta_0}<\min\{P(a_1),P(a_2)\}$$ for some constant
$\delta_0\in (0,1)$. Let ${\hat{P}}$ be an empirical type\footnote{An empirical
type over ${\cal{A}}^s$ is a distribution $\hat{P}$ over $\cal{A}$ so that
$\hat{P}(a)$ is an
integer multiple of $1/s$, for all $a\in
\cal{A}$.} over
${\cal{A}}^s$ so that
$\min\{\frac{{\hat{P}}(a_1)}{{P}(a_1)},\frac{P(a_2)}{{\hat{P}}(a_2)}\}\geq
\delta_0$ and ${\hat{P}}(a_2)\geq 1/s$.  Let $\bar{P}$ be defined
so that $\bar{P}(a_1)=\hat{{P}}(a_1)-\frac{3}{s}$, $
\bar{P}(a_2)={\hat{P}}(a_2)+\frac{3}{s}$, and $
\bar{P}(a_i)={{\hat{P}}}(a_i)$ for any $a_i\in
{\cal{A}}\backslash\{a_1,a_2\}$. Then $$P^s(T(\bar{P}))\geq
\delta P^s(T(\hat{{P}}))$$ for some strictly positive constant
$\delta=\delta(\delta_0)$, where $P^s$ denotes the product
distribution induced by $P$ over ${\cal{A}}^s$, and where
$T({\hat{P}})$ and $T(\bar{P})$ denote the set of sequences of
length $s$ with empirical type ${\hat{P}}$ and $\bar{P}$,
respectively.
\end{lem}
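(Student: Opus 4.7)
The plan is to compute $P^s(T(\bar{P}))/P^s(T(\hat{P}))$ explicitly using the standard type-class probability formula $P^s(T(P'))=\binom{s}{\{sP'(a)\}_{a\in\mathcal{A}}}\prod_a P(a)^{sP'(a)}$, and then to show this ratio is bounded below by a constant depending only on $\delta_0$. Writing $n_j:=s\hat{P}(a_j)$, we have $s\bar{P}(a_1)=n_1-3$, $s\bar{P}(a_2)=n_2+3$, and $s\bar{P}(a_j)=n_j$ for $j\geq 3$, so the factor $\prod_a P(a)^{sP'(a)}$ contributes $\bigl(P(a_2)/P(a_1)\bigr)^3$ to the ratio, while the multinomial coefficients, after cancellation of $s!$ and of the factorials corresponding to the untouched letters, leave the clean expression $n_1(n_1-1)(n_1-2)/[(n_2+1)(n_2+2)(n_2+3)]$.

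Next I would extract from the hypotheses three inequalities: (i) $n_1\geq 4$, which follows from chaining $\hat{P}(a_1)\geq \delta_0 P(a_1)$ with the standing assumption $P(a_1)>3/(s\delta_0)$, giving $n_1>3$ and hence $n_1\geq 4$; (ii) $n_2\geq 1$, which is directly the assumption $\hat{P}(a_2)\geq 1/s$; and (iii) the cross-ratio bound $n_1/n_2\geq \delta_0^2\,P(a_1)/P(a_2)$, which follows from combining $\hat{P}(a_1)\geq \delta_0 P(a_1)$ with $\hat{P}(a_2)\leq P(a_2)/\delta_0$ (the second half of the $\min$-assumption).

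From (i), $n_1(n_1-1)(n_1-2)\geq \tfrac{3}{4}\cdot\tfrac{1}{2}\,n_1^3=\tfrac{3}{8}n_1^3$; from (ii), $(n_2+1)(n_2+2)(n_2+3)\leq 2n_2\cdot 3n_2\cdot 4n_2 = 24\,n_2^3$. Plugging these together with (iii) into the ratio gives
$$\frac{P^s(T(\bar{P}))}{P^s(T(\hat{P}))}\;\geq\;\frac{3}{8\cdot 24}\left(\frac{n_1\,P(a_2)}{n_2\,P(a_1)}\right)^{\!3}\;\geq\;\frac{\delta_0^{\,6}}{64},$$
so the claim holds with, say, $\delta(\delta_0)=\delta_0^{\,6}/64$. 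The calculation is essentially routine; the only thing requiring care is the bookkeeping, namely that each of the three pieces of the hypothesis (the lower bound on $P(a_1)$, the lower bound on $\hat{P}(a_2)$, and the two-sided $\min$-bound relating $\hat{P}$ to $P$ on $\{a_1,a_2\}$) is used exactly once, and this rather than any analytic difficulty is the main item to track.
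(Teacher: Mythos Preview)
Your proof is correct and follows essentially the same route as the paper: compute the ratio $P^s(T(\bar P))/P^s(T(\hat P))$ via the multinomial type-class formula, obtaining $\bigl(P(a_2)/P(a_1)\bigr)^3\cdot n_1(n_1-1)(n_1-2)/[(n_2+1)(n_2+2)(n_2+3)]$, and then bound each factor using the hypotheses. The only difference is cosmetic: the paper stops at ``$\geq \delta$ for some $\delta=\delta(\delta_0)>0$'' without naming the constant, whereas you carry the arithmetic through to the explicit value $\delta_0^6/64$.
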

\noindent We use the lemma
with ${\cal{A}}=\{a: a=z(y^N)\:\text{for some}\: y^N\in {\cal{Y}}^N\}$, $s$ defined
as the integer part of $e^{N(\alpha+\varepsilon)}$ for some arbitrary
$\varepsilon>0$, and
$P$ defined as $P(a)=\sum_{y^N:z(y^N)=a}Q(y^N|\star)$ for all $a\in
 {\cal{A}}$. Also, we let $a_1=h$, $a_2$ be the symbol in $\cal{A}$ with the
 highest probability under $P$, and ${\hat{P}}$ be any distribution on
$\cal{A}$ so that $\big|1-\frac{{\hat{P}}(a_i)}{P(a_i)}\big|<\mu$ for
$i\in\{1,2\}$. In
the sequel we label such distributions ${\hat{P}}$ as `typical types.'
  We  now assume that $s$, $P$, ${\hat{P}}$, $a_1$, and $a_2$ satisfy the hypothesis of Lemma \ref{bucket} and
 will show it at the end of the proof. 
 
Suppose by contradiction that the right hand side of \eqref{quart} goes to
zero as $N\rightarrow \infty$, i.e., that 
\begin{align}\label{quart5}
P^s\left( \big|\sum_{l=1
}^{s}Z_l\big|\leq  h \right) \geq 1-\rho
\end{align}
for any arbitrary $\rho>0$ and $N$ large enough. Assume for the moment that \eqref{quart5} implies for
$N$ large enough
\begin{align}\label{quart3}
P^{s}\left(\bigg\{\big|\sum_{l=1
}^{s}Z_l\big|\leq h\bigg\}\cap \bigg\{ Z^s\:\: \text{has a typical type
}\hat{P}\bigg\}\right)\geq 1-\mu-\rho\;.
\end{align}
This implication will be shown at the end of the proof. 
Now, for a given typical
type ${\hat{P}}$ let $\bar{P}$ be defined as in Lemma
\ref{bucket}. Observe that if $Z^s$ belongs to the event 
$$\bigg\{\big|\sum_{l=1
}^{s}Z_l\big|\leq h\bigg\}\cap \bigg\{ Z^s\:\: \text{has typical type }\hat{P}\bigg\}$$
then $Z^s$ has a type ${\hat{P}}$ that yields a $\bar{P}$ whose type
class\footnote{The type class of $\bar{P}$ is the set of all sequences $z^s$
that have type $\bar{P}$.} belongs to the event\footnote{This step follows by
noting first that $a_1 = 
e^{N D((Q(\cdot | x) || Q(\cdot | \star)) (1 + o(1))}$ as $\mu\rightarrow 0$ and
$N\rightarrow \infty$, and second that $a_2/a_1=o(1)$ as $N\rightarrow \infty$
(for $\mu>0$ small enough).}
$$\bigg\{\big|\sum_{l=1
}^{s}Z_l\big|>h\bigg\}\;.$$
Hence, from Lemma \ref{bucket} and \eqref{quart3} there exists some $\delta>0$
so that \begin{align}\label{quart4}
P^{s}\left(\big|\sum_{l=1
}^{s}Z_l\big|> h\right)\geq \delta(1-\mu-\rho)
\end{align}
for $N$ large enough,
which is in contradiction with \eqref{quart5} for $\rho$ small enough.
We conclude that $\pr\left( \big|\sum_{l=1
}^{s}Z_l\big|>  h \right)$ is asymptotically bounded away from zero,
and so is the right hand side of \eqref{quart}.

To conclude the proof we need to justify the steps from \eqref{quart5} to
\eqref{quart3} and we need to check that $P$ and ${\hat{P}}$ satisfy the hypothesis of the lemma with our choice
of $a_1$ and $a_2$. For this last check, first note
that $z(y^N)$ depends only on the type of $y^N$. Without loss of generality we
assume that $h$ is achieved by a type in $T_m$. Hence we have\footnote{Throughout the paper
we use the notation
$\poly(N)$ to denote any term that is either a polynomial in $N$ or the
 inverse of a polynomial in $N$.}
\begin{align*}
P(a_1)&\triangleq\sum_{y^N\in T_m}Q(y^N|\star)\\
&\geq
e^{-ND(Q(\cdot|x)||Q(\cdot|\star))(1+\eta)}\poly (N)
\end{align*}
where $x$ is the $N$ times repeated symbol for the codeword $c(m)$, and where
$\eta=\eta(\mu)>0$ goes to zero as $\mu$ vanishes.
It follows that $s P(a_1)$ grows
exponentially with $N$ provided $\mu$ is small enough. Thus the condition
$1/(s \delta_0) <P(a_1)$ is trivially satisfied for any $\delta_0\in (0,1)$.
Also, our choice of $a_2$ gives $1/(s \delta_0) <P(a_2)$
for any $\delta_0$. This is because $P(a_2)\geq \poly (N)$ since there are
polynomially many types of length $N$ and that $a_2$ is generated by the type of
the highest probability. Finally, that the conditions $\min\{\frac{{\hat{P}}(a_1)}{{P}(a_1)},\frac{P(a_2)}{{\hat{P}}(a_2)}\}\geq
\delta_0$ and ${\hat{P}}(a_2)\geq 1/s$ are satisfied follows from the definition
of ${\hat{P}}$. 

Finally we show that
$$P^{s}\left( Z^s\:\: \text{has typical type }\hat{P}\right)\;$$
can be made arbitrarily close to one as $N$ tends to infinity, justifying the step from
\eqref{quart5} to \eqref{quart3}. Using Chebyshev's inequality and the fact that the variance
of a binomial is dominated by its expectation we get\footnote{Here
$\openone_{a_1}(Z_l)$ equals 1 if $Z_l=a_1$, zero else.}
\begin{align*}
P^{s}\left( \bigg| \frac{\hat{P}_{Z^s}(a_1)}{P(a_1)}-1\bigg| \geq \mu\right)&=
P^{s}\left(\big|\sum_{l=1}^s\openone_{a_1}(Z_l) -sP(a_1)\big|\geq s\mu
P(a_1)\right)\nonumber\\
&\leq \frac{1}{s\mu^2P(a_1)}
\end{align*}
which goes to zero as $N\rightarrow \infty$ since we proved above that $sP(a_1)$ grows (exponentially) with $N$. A
similar argument shows that $P^{s} (| \hat{P}_{Z^s}(a_2)/P(a_2)-1| \geq
\mu)$ vanishes as $N$ increases. Since
$$P^{s}\left( Z^s\:\: \text{has typical type }\hat{P}\right)=P^{s}\left( \bigg|
\frac{\hat{P}_{Z^s}(a_i)}{P(a_i)}-1\bigg| <\mu,\:\:i=1,2 \right)$$
the claim is proved.


 \end{proof}

The direct part of Theorem \ref{unow} is obtained by a random coding
argument  associated with the scheme presented in Section \ref{achievability}.
We assume that all the components of all codewords
are chosen i.i.d. according
to some distribution $P$ to be specified later. Given that message $m$ starts being emitted at time $l$, we bound the probability of
error as
\begin{align*}
\pr_{m,l}(\E)\leq \pr_{m,l}(\min_{m'\ne
m}\tau_{m'}<l+N-1)+\pr_{m,l}(\tau_{m}\geq l+N)
\end{align*}
with $\tau_m$ as defined in \eqref{tdecod},
which is interpreted as the sum of the probability of false-alarm and the
probability of missing the correct codeword. In order to upper bound the above
two terms, let us define the event $E(m,n,i,k)$  as the intersection of the events
\begin{align*}
 k I(\hat{P}_{C^k(m),Y_{n-i+1}^{n-i+k}})
 +(i-k)I(\hat{P}_{C_{k+1}^i(m),Y_{n-i+k+1}^n})\geq t_2 \ln M\quad
\end{align*}
and $ \quad iD(\hat{P}_{Y_{n-i+1}^n}|| Q(\cdot|\star))\geq t_1 \ln M$.
 Also let $E(m,n,i)=\cap_{k=1,2,\ldots,i} E(m,n,i,k)$. We interpret $E(m,n,i)$
 as the event that message $m$ is declared at time $n$ by observing the
 last $i$ symbols. With these definitions we
 have\footnote{The notation $a\wedge b$ is used for the minimum of $a$ and $b$.}

\begin{align}\label{un}
\pr_{m,l}(\min_{m'\ne
m}\tau_{m'}<l+N-1)\leq \sum_{\substack{m'\ne m\\ n\in [1,\ldots,A+N-1]\\ i\in
[1,\ldots,N\wedge n] }}\pr_{m,l}\left(E(m',n,i)\right)
\end{align}
from the union bound, and
\begin{align}\label{deux}
\pr_{m,l}(\tau_{m}\geq l+N)\leq \pr_{m,l}(E(m,l+N-1,N)^c)\;.
\end{align}
Lemmas
\ref{false-alarm} and \ref{miss} below upper bound the
right hand sides of \eqref{un} and \eqref{deux}.

We denote by $\cal{P}$, $\cal{P}^X$, and $\cal{P}^Y$ the set of all distributions on
${\cal{X}}\times {\cal{Y}}$, ${\cal{X}}$, and ${\cal{Y}}$
respectively. Later we will also use ${\cal{P}}^{{\cal{Y}}|{\cal{X}}}$ to denote the set of
conditional distributions of the form $V(y|x)$ with $x\in {\cal{X}}$ and $y\in {\cal{Y}}$.
Further we denote by ${\cal{P}}_n$ the set of all types of length $n$ over $\cal{X}\times
\cal{Y}$, and similarly for ${\cal{P}}^{\cal X}_{n}$ and ${\cal{P}}^{\cal Y}_{n}$.
As mentioned earlier, the notation $\poly (N)$ is used for a term
that grows no faster than polynomially in $N$.
\begin{lem}[false-alarm]\label{false-alarm}
Assume the codebook to be randomly generated so that each sample of each codeword
is i.i.d.
according to some distribution $P$. For any threshold constants $t_1,t_2\in \mathbb{R}$ and
asynchronism level $A\geq 1$
\begin{align*}
\sum_{\substack{m'\ne m\\ n\in [1,\ldots,A+N-1]\\ i\in
[1,\ldots,N\wedge n] }}\pr_{m,l}\left(E(m',n,i)\right)&\leq
(M^{-(t_1+t_2-1)}A+M^{-(t_2-1)})\poly (N)
\;.
\end{align*}
\end{lem}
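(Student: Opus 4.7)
My plan is to bound each term $\pr_{m,l}(E(m',n,i))$ separately according to how the decoding window $[n-i+1,n]$ sits relative to the true transmission interval $[l,l+N-1]$, and then count the number of triples $(m',n,i)$ falling in each region. Two regions arise: (I) the window is disjoint from $[l,l+N-1]$, so the observations are pure noise; (II) the window overlaps $[l,l+N-1]$, so the observations mix noise symbols with channel outputs driven by $c(m)$.

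\textbf{Case I --- pure noise.} Here $Y_{n-i+1}^n$ is i.i.d.\ $Q(\cdot|\star)$ and, because $m'\neq m$, independent of $C^i(m')$. The event $E(m',n,i)$ forces, in particular, the $k=i$ summand $iI(\hat{P}_{C^i(m'),Y_{n-i+1}^n})\geq t_2\ln M$ jointly with the divergence condition $iD(\hat{P}_{Y_{n-i+1}^n}\|Q(\cdot|\star))\geq t_1\ln M$. I would write the probability of any joint type $J$ on $\mathcal{X}\times\mathcal{Y}$ under $P^i\times Q(\cdot|\star)^i$ as $\poly(N)\exp\bigl(-i[I(J)+D(J_X\|P)+D(J_Y\|Q(\cdot|\star))]\bigr)$ and sum over the $\poly(N)$ joint types compatible with the two constraints, yielding $\pr_{m,l}(E(m',n,i))\leq \poly(N)\,M^{-(t_1+t_2)}$. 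Since there are at most $(M-1)(A+N-1)N$ such triples, this case contributes at most $M^{-(t_1+t_2-1)}A\,\poly(N)$, which is the first term in the stated bound.

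\textbf{Case II --- window overlaps $[l,l+N-1]$.} Because $i\leq N$, the window cannot strictly contain the transmission interval, so only three sub-cases occur: the window straddles the left boundary $l$, the right boundary $l+N-1$, or lies entirely inside $[l,l+N-1]$. The decoding condition involves a \emph{minimum} over $k\in[1,\ldots,i]$, so to upper-bound $\pr_{m,l}(E(m',n,i))$ it is enough to select one convenient $k$ and bound the corresponding $E(m',n,i,k)$. In the two straddling sub-cases I would pick $k$ equal to the offset of the relevant boundary within the window; this splits $Y_{n-i+1}^n$ into one piece of pure noise and a second piece that is a contiguous fragment of the output driven by $c(m)$, with $C^k(m')$ and $C_{k+1}^i(m')$ independent of their respective pieces. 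In the ``entirely inside'' sub-case any $k$ works. For each piece one has the standard method-of-types estimate
\begin{align*}
\pr\bigl(jI(\hat{P}_{C^j,y^j})\geq b\bigr)\;\leq\;\poly(N)\,e^{-b},\qquad y^j\ \text{fixed},
\end{align*}
derived by expanding $P^j$ on each conditional type class. Applying it independently to the two pieces and summing over type pairs $(J_1,J_2)$ with $kI(J_1)+(i-k)I(J_2)\geq t_2\ln M$ gives $\pr_{m,l}(E(m',n,i))\leq \poly(N)\,M^{-t_2}$. The relevant $n$ lie in an interval of length $\leq 2N$ around $l$ and $i\leq N$, so at most $(M-1)\cdot 2N\cdot N$ triples fall in this case, producing the second term $M^{-(t_2-1)}\poly(N)$.

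Adding the two contributions yields the lemma. The main obstacle I anticipate is Case II: there the divergence condition cannot be exploited, because $Y_{n-i+1}^n$ is no longer purely $Q(\cdot|\star)$-distributed, and a single undivided window would leave the codeword portion uncontrolled. The argument goes through precisely because of the convoluted min-over-$k$ form of the mutual-information criterion, which lets the analysis localize on two independent sub-blocks created by splitting at the transmission boundary --- which is exactly the technical point highlighted in the footnote defining $\tau_m$.
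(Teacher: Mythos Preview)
Your proposal is correct and follows essentially the same two-case decomposition as the paper: outside the transmission window you exploit both the divergence and mutual-information constraints via the identity $D(V\|P_1P_2)=I(V)+D(V_X\|P_1)+D(V_Y\|P_2)$ to get $M^{-(t_1+t_2)}$ per term, and in the overlapping case you pick the split index $k$ at the transmission boundary and use only the mutual-information constraint to get $M^{-t_2}$. Your explicit enumeration of the three overlap sub-cases and the conditioning-on-$y^j$ variant are minor presentational differences from the paper, which states only the left-straddle form but whose bound applies verbatim to the other sub-cases since only $I(V)\leq D(V\|P_1P_2)$ is used.
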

\noindent Notice that the above bound on the false-alarm error probability does not depend on $P$.
Also notice that if $t_1+t_2\leq 1$ or $t_2\leq 1$ the lemma is trivial.
\begin{proof}[Proof of Lemma \ref{false-alarm}]

 We distinguish the cases when $E(m',n,i)$ is generated outside the message
transmission period and when it is generated partly outside and partly inside
the message transmission period.
In both cases we will use the identity
\begin{align}\label{viex}
D(V||P_1P_2)= I(V)+D(V_X||P_1)+D(V_Y||P_2)\;,
\end{align}
where $V$ denotes any distribution on ${\cal{X}}\times {\cal{Y}}$ with marginals
$V_X$ and $V_Y$, and where $P_1$ and $P_2$ are any distributions on $\cal{X}$ and
$\cal{Y}$ respectively.

\emph{Case I: $E(m',n,i)$ is generated outside the message transmission period (i.e.,
$n<l$ or $n-i+1\geq l+N$)} \\
By definition $E(m',n,i) \subset E(m',n,i,i)$, hence from Theorem $12.1.4$ \cite{CT}
and \eqref{viex} we get
\begin{align}\label{ghw1}
\pr_{m,l}\left(E(m',n,i)\right)&\leq \pr_{m,l} (E(m',n,i,i))\nonumber \\
& \leq\sum_{\substack{V\in \Pc_i \\ i
    I(V)\geq t_2 \ln M \\ iD(V_Y||Q(\cdot|\star))\geq t_1 \ln M}}e^{-iD(V||PQ(\cdot|\star))}\nonumber \\ &\leq
    \sum_{\substack{V\in \Pc_i \\ i
    I(V)\geq t_2 \ln M \\ iD(V_Y||Q(\cdot|\star))\geq t_1 \ln M}}e^{-iI(V)-iD(V_Y||Q(\cdot|\star))}\nonumber \\
    &\leq
    (i+1)^{|{\cal{X}}||{\cal{Y}}|}M^{-t_2} M^{ -t_1 }\nonumber \\
    &\leq
    \poly (N)M^{-t_2} M^{ -t_1 }
    \end{align}
 where the last two inequalities hold since $|{\cal{P}}_i|\leq
(i+1)^{|{\cal{X}}||{\cal{Y}}|}$ by Lemma 2.2 \cite{CK} and because $i\leq N$.

\emph{Case II: $E(m',n,i)$ is generated partly outside and partly inside the message transmission period
(i.e., $n\geq l$ and $n-i+1\leq l+N-1$)}\\
Here the event $E(m',n,i)$ involves the output random variables $Y_{n-i+1},Y_{n-i+2},\ldots,Y_n$, the
first $k$ being distributed according to the noise
distribution, and the remaining $i-k$ according to the distribution induced by the sent codeword. Since, by definition, $E(m',n,i)\subset E(m',n,i,k)$ for any $k\in [0,1,\ldots,i]$, a
similar computation as for Case $I$ based on the identity \eqref{viex} yields
\begin{align}\label{ghw2}
\pr_{m,l}\left(E(m',n,i)\right)&\leq \pr_{m,l}\left(E(m',n,i,k)\right)\nonumber \\
&\leq \sum_{\substack{V_1\in
    \Pc_{k}, V_2 \in \Pc_{i-k} \\ k I(V_1)+
    (i-k)I(V_2)\geq t_2 \ln M
    }}e^{-kD(V_1||PQ(\cdot|\star))-(i-k)D(V_2||PP_Y)}\nonumber \\
    &\leq \sum_{\substack{V_1\in \Pc_{k}, V_2 \in \Pc_{i-k} \\
    k I(V_1)+ (i-k)I(V_2)\geq t_2 \ln M }}e^{- k
    I(V_1)- (i-k)I(V_2)}\nonumber \\ &\leq
    \poly (N)M^{-t_2}
\end{align}
where $P_Y(y)\triangleq\sum_{x\in {\cal{X}}}P(x)Q(y|x)$.

Combining the cases $I$ and $II$ we get
\begin{align*}
\sum_{\substack{m'\ne m\\ n\in [1,\ldots,A+N-1]\\ i\in
[1,\ldots,N\wedge n] }}\pr_{m,l}\left(E(m',n,i)\right)&\leq
(M^{-(t_1+t_2-1)}A+M^{-(t_2-1)})\poly (N)
\end{align*}
yielding the desired result.
\end{proof}
\begin{lem}[miss]\label{miss}
Assume the codebook to be randomly generated so that each sample of each codeword
is i.i.d.
according to some distribution $P$. For any threshold constants $t_1\geq 0$ and
$t_2\geq 0$
\begin{align*}
&\pr_{m,l}(E(m,l+N-1,N)^c)\nonumber \\
&\leq  \poly (N)\Big( \exp \big[-N\inf_{\substack{V\in {\cal{P}}^{\cal{Y}}\\
D(V||Q(\cdot|\star))<
t_1 \ln M/N}}D(V||P_Y)\big]+\exp\big[-N\min_{\substack{V
\in \Pc\\ I(V)\leq t_2 \ln M/N}} D(V||PQ)\big] \Big)
\end{align*}
where $P_Y(y)=\sum_{x\in {\cal{X}}}P(x)Q(y|x)$. (The infimum is defined to be
equal to $+\infty$ whenever the set over which it is defined is empty.).
\end{lem}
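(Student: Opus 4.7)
My plan is to upper-bound $\pr_{m,l}(E(m,l+N-1,N)^c)$ by a union bound on the complement of $E(m,n,i) = E_{\mathrm{Div}} \cap \bigcap_{k=1}^i E_{\mathrm{MI},k}$, which yields
\[
\pr_{m,l}(E(m,l+N-1,N)^c) \;\leq\; \pr_{m,l}(E_{\mathrm{Div}}^c) \;+\; \sum_{k=1}^{N} \pr_{m,l}(E_{\mathrm{MI},k}^c).
\]
For the divergence piece, averaging over the random codeword $C(m)$ under $\pr_{m,l}$ renders the transmission-period outputs $Y_l,\ldots,Y_{l+N-1}$ i.i.d.\ with marginal $P_Y(y) = \sum_x P(x) Q(y|x)$, so the standard type-class (Sanov) bound immediately gives
\[
\pr_{m,l}(E_{\mathrm{Div}}^c) \;\leq\; \poly(N)\exp\!\Big[-N \inf_{V\in{\cal P}^{\cal Y}:\, D(V||Q(\cdot|\star))< t_1\ln M/N} D(V||P_Y)\Big],
\]
which is exactly the first claimed term.

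For each mutual-information piece, the pairs $(C_j(m),Y_{l+j-1})$ are i.i.d.\ $\sim PQ$ under random coding, so the sub-types $V_1^{(k)} = \hat P_{C^k,Y_l^{l+k-1}}$ and $V_2^{(k)} = \hat P_{C^N_{k+1},Y_{l+k}^{l+N-1}}$ are independent multinomial types over sub-sequences of length $k$ and $N-k$. Sanov applied to each yields
\[
\pr_{m,l}(E_{\mathrm{MI},k}^c) \;\leq\; \poly(N)\!\!\sum_{\substack{(J_1,J_2)\\ \frac{k}{N}I(J_1)+\frac{N-k}{N}I(J_2)<R_2}}\!\!\! \exp\!\bigl[-k D(J_1||PQ)-(N-k)D(J_2||PQ)\bigr],
\]
with $R_2 = t_2\ln M/N$. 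The remaining task is to show that this sum is at most $\poly(N)\exp[-Nm(R_2)]$, where $m(R) := \min_{V\in{\cal P}:\, I(V)\leq R} D(V||PQ)$; the outer $N$-fold union over $k$ is then absorbed into $\poly(N)$.

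The crux is the uniform-in-$k$ lower bound $F(J_1,J_2) := \frac{k}{N}D(J_1||PQ) + \frac{N-k}{N}D(J_2||PQ) \geq m(R_2)$ under the constraint $\frac{k}{N}I(J_1) + \frac{N-k}{N}I(J_2) \leq R_2$. Joint convexity of $D(\cdot||PQ)$ gives $F \geq D(\bar J||PQ)$ with $\bar J = \frac{k}{N}J_1 + \frac{N-k}{N}J_2$, which settles the easy case $I(\bar J)\leq R_2$ at once. For the general case---where mutual information is neither convex nor concave in the joint distribution and so $I(\bar J)$ can strictly exceed $R_2$---I would exploit the fact that each individual type satisfies $D(J_i||PQ) \geq m(I(J_i))$ by definition of $m$, giving
\[
F \;\geq\; \tfrac{k}{N} m(I(J_1)) + \tfrac{N-k}{N} m(I(J_2)) \;\geq\; m\!\bigl(\tfrac{k}{N}I(J_1) + \tfrac{N-k}{N}I(J_2)\bigr) \;\geq\; m(R_2),
\]
provided $m(\cdot)$ is convex and non-increasing in its argument. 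Establishing the convexity of $m(R)$ despite the non-convex constraint set $\{V:I(V)\leq R\}$ is the main technical obstacle; I expect this to follow from a time-sharing / randomization argument at the ensemble level, in analogy with the classical convexity of the sphere-packing exponent in channel coding.
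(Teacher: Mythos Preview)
Your proposal is correct and follows essentially the same route as the paper: the same union-bound split into the divergence piece and the $N$ mutual-information pieces, the same Sanov/type-counting bounds on each, and the same reduction of the resulting two-type minimization to the single-type quantity $m(R_2)=\min_{V:I(V)\leq R_2}D(V\|PQ)$. The convexity and monotonicity of $m(\cdot)$ that you flag as the crux is precisely what the paper isolates as a separate appendix lemma (Lemma~\ref{ppetit}), proved via the time-sharing/sphere-packing-style argument you anticipate.
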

\begin{proof}
The union bound yields
\begin{align}\label{pelong}
\pr_{m,l}&(E(m,l+N-1,N)^c)\nonumber \\
&\leq \pr_{m,l}(ND(\hat{P}_{{Y^{l+N-1}_l}}||Q(\cdot|\star))<  t_1 \ln M)\nonumber \\
&+\sum_{k\in [1,\ldots,N]}\pr_{m,l}\left(
 k I(\hat{P}_{C^k(m),Y_{l}^{l+k-1}})
 +(N-k)I(\hat{P}_{C_{k+1}^N(m),Y_{l+k}^{l+N-1}})\leq t_2 \ln
 M\right)\;.
\end{align}
For the first term on the right hand side of \eqref{pelong} we get
\begin{align*}
\pr_{m,l}(ND(\hat{P}_{{Y^{l+N-1}_l}}||Q(\cdot|\star))<  t_1 \ln M)&\leq \poly (N)
\exp\big[-N\inf_{\substack{V\in {\cal{P}}^{\cal{Y}}\\
D(V||Q(\cdot|\star))<
t_1 \ln M/N}}D(V||P_Y)\big]
\end{align*}
where $P_Y(y)\triangleq\sum_{x\in {\cal{X}}}P(x)Q(y|x)$.
To prove the lemma we now show that the second term on the right hand side of \eqref{pelong} can be
bounded as
\begin{align*}
\sum_{k\in [1,\ldots,N]}\pr_{m,l}&\left(
 k I(\hat{P}_{C^k(m),Y_{l}^{l+k-1}})
 +(N-k)I(\hat{P}_{C_{k+1}^N(m),Y_{l+k}^{l+N-1}})\leq t_2 \ln
 M\right)\nonumber \\
& \leq\poly (N) \exp\big[-N\min_{\substack{V
\in \Pc\\ I(PV)\leq t_2 \ln M/N}}
D(V||PQ)\big]\;.
\end{align*}
This is done by the following inequalities
 \begin{align}\label{lautes}
\sum_{k\in [1,\ldots,N]}&\pr_{m,l}\left(
 k I(\hat{P}_{C^k(m),Y_{l}^{l+k-1}})
 +(N-k)I(\hat{P}_{C_{k+1}^N(m),Y_{l+k}^{l+N-1}})\leq t_2 \ln
 M\right)\nonumber \\
 &\leq  \sum_{\substack{V\in \Pc_{k},W\in \Pc_{N-k}\\
kI(V)+(N-k)
    I(V)\geq t_2 \ln M }} e^{-kD(V||PQ)-(N-k)D(V||PQ)} \nonumber \\
&\leq\poly (N)\exp\big[-N\min_{\delta\in[0,1]}\min_{(V,W)\in S_\delta}(\delta
 D(V||PQ)+(1-\delta)D(W||PQ))\big]\nonumber \\
&=\poly (N)\exp\big[-N\min_{V\in {\cal{P}}:I(V)\leq
\frac{t_2 \ln M}{N}}D(V||PQ)\big]
   \end{align}
   where we defined
   $$S_\delta=\{V,W\in {\cal{P}}\,: \delta I(V)+(1-\delta)I(W)\geq t_2 \ln M\}$$
and where the equality in \eqref{lautes} is justified in Lemma \ref{ppetit} given in the Appendix.
\end{proof}
\noindent The following Proposition establishes the direct part of Theorem \ref{unow} and
will be proved using Lemmas \ref{false-alarm} and~\ref{miss}.

\begin{prop}[Achievability]\label{exca}
For a channel $Q$ with strictly positive capacity, any asynchronism exponent strictly less than $$\max_x
D(Q(\cdot|x)||Q(\cdot|\star))$$ is achievable by a coding strategy
that satisfies $\lim_{N\rightarrow \infty}\ln M/N>0$.
\end{prop}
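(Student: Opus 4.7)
The plan is to apply the random coding scheme of Section~\ref{achievability}, generating codewords i.i.d.\ from a carefully chosen input distribution $P$ and tuning the thresholds $t_1,t_2$ so that Lemmas~\ref{false-alarm} and~\ref{miss} both deliver vanishing bounds.

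\textbf{Choice of $P$.} Let $x^\star\in\arg\max_x D(Q(\cdot|x)||Q(\cdot|\star))$, and pick any distribution $\mu$ on $\mathcal{X}$ with $I(\mu Q)>0$; such a $\mu$ exists because $C(Q)>0$. Set $P_\varepsilon=(1-\varepsilon)\delta_{x^\star}+\varepsilon\mu$. The direct identity
\begin{align*}
D((P_\varepsilon Q)_Y||Q(\cdot|\star))+I(P_\varepsilon Q)=\sum_{x\in\mathcal{X}}P_\varepsilon(x)\,D(Q(\cdot|x)||Q(\cdot|\star))
\end{align*}
shows that this sum tends to $\alpha(Q)$ as $\varepsilon\downarrow 0$. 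Hence, for any prescribed $\alpha<\alpha(Q)$, one can fix $\varepsilon>0$ small enough that, abbreviating $P=P_\varepsilon$, $D=D((PQ)_Y||Q(\cdot|\star))$, $I=I(PQ)$, we have $D+I>\alpha$, $D>0$, and $I>0$ simultaneously.

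\textbf{Parameter tuning and error bounds.} Set $R=\ln M/N$ and seek $t_1\ge 0$, $t_2>1$, $R>0$ satisfying $(t_1+t_2-1)R>\alpha$, $t_1R<D$, and $t_2R<I$. Summing the last two and combining with the first collapses the joint feasibility to the single inequality $R<D+I-\alpha$ (plus $R<I$, automatic from $t_2>1$), which by the previous paragraph admits some $R>0$. Fix such a triple and let $M=\lceil e^{NR}\rceil$. With $A=\lceil e^{N\alpha}\rceil$, Lemma~\ref{false-alarm} bounds the false-alarm probability by $\bigl(e^{-N((t_1+t_2-1)R-\alpha)}+e^{-N(t_2-1)R}\bigr)\poly(N)$, which vanishes. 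For the miss event, Lemma~\ref{miss} yields the two exponents
\begin{align*}
\inf_{V:D(V||Q(\cdot|\star))<t_1R}D(V||P_Y)\quad\text{and}\quad\min_{V:I(V)\le t_2R}D(V||PQ),
\end{align*}
both strictly positive: $P_Y$ sits strictly outside the closure of the first constraint set since $D(P_Y||Q(\cdot|\star))=D>t_1R$, and $PQ$ sits strictly outside the second since $I(PQ)=I>t_2R$; lower semicontinuity of the divergence then forces each infimum to be bounded away from zero. Averaging over message and start time and expurgating the random codebook extracts a deterministic code achieving the claim, with $\liminf_{N\to\infty}\ln M/N = R > 0$ as required.

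The main conceptual obstacle is the identity of the first paragraph. The divergence threshold $t_1$ controls false-alarm through $D$, while the mutual-information threshold $t_2$ controls miss-isolation through $I$; neither quantity alone can approach $\alpha(Q)$ when $C(Q)<\alpha(Q)$, but their \emph{sum} can be pushed arbitrarily close to $\alpha(Q)$ by concentrating $P$ near $\delta_{x^\star}$, which is what permits $\alpha$ arbitrarily close to $\alpha(Q)$ together with strictly positive $\ln M/N$. The only ancillary subtlety is verifying strict positivity of the two miss exponents via lower semicontinuity rather than direct computation.
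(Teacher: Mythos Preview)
Your proof is correct and follows essentially the same approach as the paper: random coding with i.i.d.\ symbols, Lemmas~\ref{false-alarm} and~\ref{miss} to bound false-alarm and miss, and the key identity $D((PQ)_Y||Q(\cdot|\star))+I(PQ)=\sum_x P(x)D(Q(\cdot|x)||Q(\cdot|\star))$ to push the achievable exponent to $\alpha(Q)$. The only cosmetic differences are that the paper optimizes over all $P$ and invokes continuity of the left-hand side to reach the supremum, whereas you construct $P_\varepsilon$ explicitly near $\delta_{x^\star}$; and the paper separates the case $Q(y|\star)=0$ for some $y$ (where $\alpha(Q)=\infty$), whereas your argument absorbs it implicitly via $D=\infty$.
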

\begin{proof}
Using Lemmas \ref{false-alarm} and \ref{miss}
we get for any $A\geq 1$, $t_1\geq 0$, $t_2>1$, and distribution $P$
\begin{align}\label{ineqprop}
\pr({\EuScript{E}})\leq&\poly (N)\Bigg(M^{-(t_1+t_2-1)}A+M^{-(t_2-1)}\nonumber
\\
&+ \exp\big[-N\inf_{\substack{V\in
{\cal{P}}^{{\cal{Y}}}\\ D(V||Q(\cdot|\star))<
t_1 \ln M/N}}D(V||P_Y)\big]+\exp\big[-N\min_{\substack{V
\in {\cal{P}}\\ I(V)\leq t_2 \ln M/N}}
D(V||PQ)\big] \Bigg)
\end{align}
where $P_Y(y)=\sum_x P(x)Q(y|x)$.
We focus on the four terms
inside the large brackets of the above expression. 
For now we assume that
$Q(y|\star)>0$ for all $y\in {\cal{Y}}$, implying that
$D(P_Y||Q(\cdot|\star))<\infty$ for any input distribution $P$. The case where
$Q(y|\star)=0$ for some $y\in {\cal{Y}}$ is considered at the end of the proof. 

Pick an input distribution $P$ so that $I(PQ)>0$ and $D(P_Y||Q(\cdot|\star))>0$ (this is possible since
$C(Q)>0$), fix $t_2>1$,
 and let $\mu>0$ be a small constant (later we will take 
$t_2\rightarrow \infty$ and $\mu\rightarrow 0$).  Then choosing the ratio $\ln M/N>0$ and the constant
$t_1\geq 0$ so that
\begin{align}
\frac{t_2 \ln M}{N}=
I(PQ)-\mu/2\label{siz}
\end{align}
and 
\begin{align}\label{ase}
\frac{t_1 \ln M}{N}=
D(P_Y||Q(\cdot|\star))-\mu/2\;,
\end{align}
the second, third, and fourth term inside the large brackets in \eqref{ineqprop}
decay exponentially with $N$. Now for the first term. From \eqref{siz} and \eqref{ase} 
we get
\begin{align}\label{albet}
t_1 +t_2=\frac{N}{\ln
M}\left(D(P_Y||Q(\cdot|\star))+I(PQ)-\mu\right)\;.
\end{align}
For the first term to go to zero exponentially with $N$ we further choose $A=M^{t_1 +t_2
-(1+\mu)}$, or, equivalently using \eqref{siz} and \eqref{albet}
\begin{align}
\label{wce}
A&=e^{N\left(D(P_Y||Q(\cdot|\star))+I(PQ)-\mu-\frac{\ln
M}{N}(1+\mu)\right)}\nonumber \\
&=e^{N\left(D(P_Y||Q(\cdot|\star))+I(PQ) -\mu -\frac{1+\mu}{t_2}(I(PQ)-\mu/2)  \right)}\;.
\end{align}
Since $\mu$ can be made arbitrarily small and $t_2$ arbitrarily large
we conclude from \eqref{wce} that, as long as $A=e^{N\alpha}$ with
\begin{align}\label{gawisv}
\alpha<  D(P_Y||Q(\cdot|\star))+I(PQ)
\end{align}
the right hand side of \eqref{ineqprop} goes to zero as $N$ tends to infinity.
Maximizing the right hand side of \eqref{gawisv} over the input
distributions $P$ gives
$D(Q(\cdot|x)||Q(\cdot|\star))$, yielding the desired result. To prove this we show that\footnote{The domain
over which the supremum is taken is nonempty since $C(Q)>0$.}
\begin{align}\label{gawis}
\sup_{\substack{P\\D(P_Y||Q(\cdot|\star))>0\\I(PQ)>0}}
(D(P_Y||Q(\cdot|\star))+I(PQ))=\max_x D(Q(\cdot|x)||Q(\cdot|\star))\;.
\end{align}
Since we assumed that $Q(y|\star)>0$ for all $y\in \cal{Y}$, we have that $D(P_Y||Q(\cdot|\star))+I(PQ)$
is continuous in $P$ and therefore
$$\sup_{\substack{P\\D(P_Y||Q(\cdot|\star))>0\\I(PQ)>0}} (D(P_Y||Q(\cdot|\star))+I(PQ))=\max_P
(D(P_Y||Q(\cdot|\star))+I(PQ))\;.$$
Rewriting $D(P_Y||Q(\cdot|\star))+I(PQ)$ we get
\begin{align*}
D(P_Y||Q(\cdot|\star))+I(PQ)&=\sum_x P(x) D(Q(\cdot|x)||Q(\cdot|\star))\;,
\end{align*}
hence
\begin{align*}
\sup_{\substack{P\\D(P_Y||Q(\cdot|\star))>0\\I(PQ)>0}} (D(P_Y||Q(\cdot|\star))+I(PQ))=\max_x
D(Q(Y|x)||Q(\cdot|\star))\;.
\end{align*}
We now focus on the case where $Q(y|\star)=0$ for some $y\in {\cal{Y}}$.
Pick an input distribution $P$ such that $I(PQ)>0$ and
$D(P_Y||Q(\cdot|\star))=\infty$ --- one possibility is to take $P$ as the
uniform distribution over $\cal{X}$. Again consider the four terms into large
brackets in \eqref{ineqprop}. Fix $t_2>1$ and fix the ratio $\ln M/N$ so
that $0<\frac{t_2 \ln M}{N}<I(PQ)$.  It follows that the second and fourth term
decay exponentially with $N$. Now, with our choice of input distribution note
that the third term decays exponentially
with $N$, irrespectively of how large $t_1$ is. By letting $A=M^{t_1}$ it
follows
that the four terms decay exponentially with $N$, irrespectively of the
exponential growth
rate of $A$ with respect to $N$. Hence, when $Q(y|\star)=0$ for some $y\in
{\cal{Y}}$, an asynchronism exponent arbitrary large can be achieved.

(Note that above we always assumed $\ln M/N$ to be some strictly
positive constant. Therefore the second part of the claim of the proposition
follows.)
\end{proof}

To prove Theorem \ref{ra} we consider the same random coding argument
used in proving Proposition \ref{exca}, except that we modify the random codebook
ensemble so that each codeword now satisfies a certain prefix condition. This
condition  will allow us to treat the codewords as being essentially of constant
composition (see, e.g.,\cite[p.117]{CK}) uniformly over their length, yielding an
improved error probability exponent compared to the case where the codewords are
i.i.d. $P$. 

The random construction of a codebook satisfying the prefix condition is obtained as
follows. Given a message $m$, the codeword
$c^N(m)$ is generated so that all of its symbols are i.i.d. according to a
distribution $P$. If the obtained codeword does not satisfy the prefix condition we
discard it and regenerate a new codeword until the prefix condition is satisfied.
The prefix condition requires that all prefixes
$c^i(m)$ of size $i$ greater than $N/\ln N$ have empirical type
$\hat{P}_{c^i(m)}$ close to $P$, in the sense that
$||P-\hat{P}_{c^i(m)}||\leq  1/\ln N$.\footnote{Here $||\cdot||$ is 
the $L_1$ norm. Also, the choice $N/\ln N$ for the minimum prefix size
could be replaced by any function $f(N)$ so that
$f(N)=o(N)$ while $\ln N/f(N)=o(1)$.}
If $N$ is large enough, with overwhelming probability a random codeword will
satisfy the prefix condition. Indeed,
by the union bound, the probability of generating a sequence $c^N(m)$
that does not satisfy the prefix condition is upper bounded by
$N\exp\big[-\big|\Theta\left(N
/(\ln N)^3\right)\big|\big]$, which tends to zero as $N$ tends to infinity.  This proves
the following lemma.
\begin{lem}\label{prefix}
The probability that a sequence $C_1,C_2,\ldots,C_N$ of random variables i.i.d.
according to $P$ does not satisfy the prefix condition tends to zero as $N$ goes to infinity.
\end{lem}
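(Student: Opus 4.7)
The plan is to union-bound over the $O(N)$ candidate prefix lengths $i$ satisfying $N/\ln N<i\leq N$ and, for each such $i$, to bound the probability that $||\hat{P}_{c^i(m)}-P||>1/\ln N$ by a standard large-deviations estimate. The overall failure probability is at most the sum of these per-prefix bounds.

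For a fixed $i$ in the admissible range, I would combine Pinsker's inequality, $D(V||P)\geq \tfrac{1}{2}||V-P||^2$, with Sanov's theorem in its method-of-types form, to obtain
\begin{align*}
\pr\big(||\hat{P}_{C^i}-P||>1/\ln N\big)\leq (i+1)^{|{\cal{X}}|}\exp\!\left[-\frac{i}{2(\ln N)^2}\right]\;.
\end{align*}
Substituting the lower bound $i>N/\ln N$ turns the exponent into $-N/(2(\ln N)^3)$, while the $(i+1)^{|{\cal{X}}|}$ prefactor is at most $\poly(N)$.

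A final union bound over the at most $N$ admissible values of $i$ introduces an additional factor of $N$, producing a total failure probability of order $\poly(N)\cdot \exp[-\Theta(N/(\ln N)^3)]$, which is precisely the expression announced in the informal estimate preceding the lemma. Since $(\ln N)^3=o(N)$, this bound tends to zero as $N\to\infty$, proving the claim. There is no real obstacle here: the only bookkeeping to watch is that the two polynomial factors (from counting types and from the union bound over prefix lengths) do not swallow the stretched-exponential decay, and since $N/(\ln N)^3\to \infty$ they comfortably cannot.
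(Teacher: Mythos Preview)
Your proposal is correct and is essentially the same argument the paper uses: a union bound over the at most $N$ prefix lengths $i>N/\ln N$, together with a Sanov/Pinsker large-deviations estimate for each prefix, yielding an overall bound of order $N\exp[-\Theta(N/(\ln N)^3)]$. The paper states exactly this bound in the sentence preceding the lemma and offers no further detail, so your write-up in fact fills in the steps the paper leaves implicit.
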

\noindent To prove Theorem \ref{ra} we will need Lemmas \ref{false-alarmb} and
\ref{missb} that bound the probabilities of false-alarm and miss assuming the
codewords satisfy the prefix condition. Before establishing these lemmas
we make a small digression on the growth rate of $M$ and $N$.
Referring to the achievability scheme of Section \ref{achievability}, decoding may happen only if $i$
is so that the condition
\begin{align*}
\min_{k\in [1,\ldots,i]} \left[k I(\hat{P}_{C^k(m),Y_{n-i+1}^{n-i+k}})
+(i-k) I(\hat{P}_{C_{k+1}^i(m),Y_{n-i+k+1}^n})\right]
\geq t_2 \ln M
\end{align*}
is satisfied. Thus, a lower bound on the values of $i$ for which decoding may
happen is ${\ln M}/{\ln |{\cal{X}}|}$ since $I(\cdot)\leq \ln |{\cal{X}}|$ and $t_2>1$
. In order guarantee that, whenever decoding happens, only codeword prefixes of
size larger than $N/\ln N$ --- the size of the smallest constant composition
prefix --- are involved we impose that $M$ and $N$ satisfy
%
\begin{align}\label{sca}
 \frac{N}{\ln N}\leq \frac{\ln M}{\ln  |{\cal{X}}|}\;.
\end{align}
\begin{lem}[false-alarm, with prefix condition]\label{false-alarmb}
Assume the codebook to be randomly generated so that  each codeword satisfies the prefix
condition according to $P$, and assume that \eqref{sca} holds. For any threshold constants $t_1,t_2\in \mathbb{R}$ and any asynchronism
level $A\geq 1$
\begin{align*}
\sum_{\substack{m'\ne m\\ n\in [1,\ldots,A+N-1]\\ i\in
[1,\ldots,N\wedge n] }}\pr_{m,l}\left(E(m',n,i)\right)&\leq
\poly (N)(M^{-(t_1+t_2-1+o(1))}A+M^{-(t_2-1+o(1))})
\end{align*}
as $N\rightarrow \infty$.
\end{lem}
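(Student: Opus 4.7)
The plan is to mirror the two-case analysis of Lemma~\ref{false-alarm} while tracking the extra constraint that $c^N(m')$ satisfies the prefix condition. By Lemma~\ref{prefix}, the probability that an i.i.d.-$P$ draw violates the prefix condition is $o(1)$, so replacing the prefix-conditioned distribution by the unconstrained i.i.d.\ distribution inflates every probability by at most a factor $1+o(1)$, which I will absorb into the $o(1)$ corrections in the exponents. The scaling~\eqref{sca} further guarantees that whenever $E(m',n,i)$ is in force at the decoder, $i \geq \ln M/\ln|{\cal X}| \geq N/\ln N$, so the prefix condition applied to $c^i(m')$ yields $||\hat{P}_{c^i(m')} - P|| \leq 1/\ln N$.

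For Case~I (window entirely outside the transmission period) I invoke $E(m',n,i) \subset E(m',n,i,i)$ and enumerate joint types $V \in \Pc_i$ of $(c^i(m'), Y_{n-i+1}^n)$ subject to $iI(V) \geq t_2 \ln M$ and $iD(V_Y||Q(\cdot|\star)) \geq t_1 \ln M$. The type-counting bound for a nearly constant-composition codeword paired with i.i.d.\ noise gives
\[
\pr_{m,l}\!\big(\hat{P}_{(c^i(m'), Y_{n-i+1}^n)} = V\big) \leq \poly(N)\, e^{-iI(V) - iD(V_Y||Q(\cdot|\star)) + i\,o(1)},
\]
where the $i\,o(1)$ absorbs the $iD(V_X||P) \leq i\,o(1)$ slack created by $||V_X - P|| = ||\hat{P}_{c^i(m')} - P|| \leq 1/\ln N$, combined through identity~\eqref{viex}. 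Summing over admissible types gives $\pr_{m,l}(E(m',n,i)) \leq \poly(N)\,M^{-(t_1+t_2)+o(1)}$.

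For Case~II (window straddling the boundary) I take $k$ equal to the physical offset between the window and the transmission period, use $E(m',n,i) \subset E(m',n,i,k)$, and split the bound along the noise vs.\ channel-output segments. With $V_1 \in \Pc_k$ the joint type of $(c^k(m'), Y_{n-i+1}^{n-i+k})$ and $V_2 \in \Pc_{i-k}$ the joint type of $(c_{k+1}^i(m'), Y_{n-i+k+1}^n)$ (whose $Y$-marginal is $P_Y$ after averaging over the independent codeword $c(m)$), the identity~\eqref{viex} and the same type-counting argument bound each contribution by $e^{-kI(V_1)+k\,o(1)}$ and $e^{-(i-k)I(V_2)+(i-k)\,o(1)}$ respectively. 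The event's constraint $kI(V_1)+(i-k)I(V_2) \geq t_2 \ln M$ then yields $\pr_{m,l}(E(m',n,i)) \leq \poly(N)\,M^{-t_2+o(1)}$.

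Assembling the two cases and summing over $m' \ne m$, $n \in [1,A+N-1]$, and $i \in [1,N\wedge n]$ produces the claimed bound. The main obstacle is Case~II: the split $k$ can be arbitrarily small, so neither $c^k(m')$ nor $c_{k+1}^i(m')$ inherits a good constant-composition guarantee individually --- only the concatenation $c^i(m')$ does. Converting the prefix-condition bound on $c^i(m')$ into joint control of the $V_{1,X}$ and $V_{2,X}$ slacks is the delicate step, and is precisely the technical reason why the decoder in~\eqref{tdecod} uses the minimum over $k$ rather than the simpler criterion~\eqref{fsg6}, as highlighted in footnote~\ref{fosnote}.
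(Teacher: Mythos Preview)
Your first paragraph already contains a complete proof: since by Lemma~\ref{prefix} the probability that an i.i.d.-$P$ codeword satisfies the prefix condition is $1-o(1)$, conditioning on the prefix condition (for both $c(m)$ and $c(m')$) inflates any probability by at most a factor $1/(1-o(1))^2 = 1+o(1)$. Applying this to every term in the sum and invoking Lemma~\ref{false-alarm} verbatim gives
\[
\sum_{\substack{m'\ne m\\ n,i}} \pr_{m,l}(E(m',n,i)) \leq (1+o(1))\,\poly(N)\big(M^{-(t_1+t_2-1)}A + M^{-(t_2-1)}\big),
\]
which is stronger than the stated bound. You never needed to reopen Cases~I and~II.

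Having chosen to redo the case analysis anyway, your Case~I is fine (indeed the $i\,o(1)$ slack is unnecessary there, since $D(V_X\|P)\geq 0$ can simply be dropped, exactly as in Lemma~\ref{false-alarm}). But your Case~II is left incomplete: you correctly identify that for small $k$ the segment $c^k(m')$ need not be close to composition~$P$, and then you stop. The paper resolves this not by ``converting the prefix-condition bound on $c^i(m')$ into joint control of $V_{1,X}$ and $V_{2,X}$'' --- which is indeed awkward --- but by a further sub-case split. When both $k\geq N/\ln N$ and $i-k\geq N/\ln N$, the prefix condition controls both segments and the argument runs as you wrote. When $k<N/\ln N$ (or symmetrically $i-k<N/\ln N$), one simply bounds the short segment's mutual-information contribution trivially by $(N/\ln N)\ln|{\cal X}|$, which is $o(\ln M)$ by the scaling assumption~\eqref{sca}; the long segment then carries the full $t_2\ln M$ threshold up to an $o(1)$ loss in the exponent. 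This is where~\eqref{sca} actually enters the proof --- you invoked it only to guarantee $i\geq N/\ln N$, but its real role in the paper's argument is to kill the short-segment contribution in Case~II.

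So: either commit to the clean reduction in your first paragraph and delete the rest, or if you want the type-by-type route, add the $k<N/\ln N$ sub-case and use~\eqref{sca} to absorb it.
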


\begin{lem}[miss, with prefix condition]\label{missb}
Assume the codebook to be randomly generated so that  each codeword satisfies the prefix
condition according to $P$ and assume that \eqref{sca} holds. For any $t_1\geq
0$ and  $t_2>0$
\begin{align}
\pr_{m,l}&(E(m,l+N-1,N)^c)\nonumber \\
&\leq  \poly (N)\Big( \exp\big[-N\inf_{\substack{V\in
{\cal{P}}^{{\cal{Y}}|{\cal{X}}}\\ D((PV)_Y||Q(\cdot|\star))<
t_1 \ln M/N}}D((PV)_Y||P_Y)(1+o(1))\big]\nonumber \\
&\hspace{1.78cm}+\exp\big[-N\min_{\substack{V
\in {\cal{P}}^{{\cal{Y}}|{\cal{X}}}\\ I(PV)\leq t_2 \ln M/N}}
D(PV||PQ)(1+o(1))\big] \Big)\label{sequel}
\end{align}
as $N\rightarrow \infty$, where $P_Y(y)=\sum_{x\in {\cal{X}}}P(x)Q(y|x)$.
\end{lem}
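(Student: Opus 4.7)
The plan is to follow the structure of the proof of Lemma~\ref{miss} while replacing the i.i.d.\ codebook bounds by near--constant-composition method-of-types bounds wherever the prefix condition is operative. A union bound first splits the miss event $E(m,l+N-1,N)^c$ into (i) the event that the divergence condition $ND(\hat{P}_{Y^{l+N-1}_l}||Q(\cdot|\star))\ge t_1\ln M$ fails, and (ii) the event that, for some split $k\in\{1,\ldots,N\}$, the mutual-information condition fails; the two exponents on the right-hand side of~\eqref{sequel} will be obtained by bounding these two pieces. The assumption~\eqref{sca} guarantees that only windows long enough for the prefix condition to bite can trigger a decoding, which will be used throughout.

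For the divergence piece, the prefix condition forces $c^N(m)$ to have empirical type within $1/\ln N$ of $P$; conditional on such a codeword, the components of $Y^{l+N-1}_l$ are independent (though not identically distributed) with expected empirical $Y$-distribution within $1/\ln N$ of $P_Y$. A method-of-types bound then gives, for each empirical type $W$ on $\cal Y$,
\[
\pr_{m,l}(\hat{P}_{Y^{l+N-1}_l}=W)\;\le\;\poly(N)\exp\bigl[-N\,D(W||P_Y)(1+o(1))\bigr],
\]
where I use that the minimum of $D(PV||PQ)$ over conditional $V$ with $(PV)_Y=W$ equals $D(W||P_Y)$ by the log-sum inequality. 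Summing over the types $W$ satisfying $D(W||Q(\cdot|\star))<t_1\ln M/N$, and reparametrizing via $W=(PV)_Y$, gives the first exponent in~\eqref{sequel}.

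For the mutual-information piece I would bound, for each $k$, the relevant probability by a union bound over pairs of joint types $J_1=P_kV_1\in{\cal P}_k$ and $J_2=P_{N-k}V_2\in{\cal P}_{N-k}$, where $P_k$ and $P_{N-k}$ denote the empirical types of the prefix and suffix of $c^N(m)$. For $k\ge N/\ln N$ the prefix condition forces both $P_k$ and $P_{N-k}$ to lie within $1/\ln N$ of $P$, so by continuity of divergence and mutual information on the (bounded) simplex the resulting exponent becomes $\delta D(PV_1||PQ)+(1-\delta)D(PV_2||PQ)$ (up to a $(1+o(1))$ factor) subject to $\delta I(PV_1)+(1-\delta)I(PV_2)\le t_2\ln M/N$, with $\delta=k/N$. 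Lemma~\ref{ppetit}, applied exactly as in the proof of Lemma~\ref{miss}, then collapses the minimization over $(\delta,V_1,V_2)$ into $\min_{V:\,I(PV)\le t_2\ln M/N}D(PV||PQ)$, which is the second exponent in~\eqref{sequel}.

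The main obstacle will be the small-$k$ regime $k<N/\ln N$, where the prefix condition gives no control on $P_k$ and the continuity step above breaks down. Here~\eqref{sca} does the work: since $kI(J_1)\le (N/\ln N)\ln|{\cal X}|=o(\ln M)$, the mutual-information constraint is carried essentially entirely by the long suffix, whose empirical type still satisfies $||P-P_{N-k}||\le 1/\ln N$ because $N-k\ge N(1-1/\ln N)$. Bounding $kD(J_1||PQ)\ge 0$ trivially and repeating the method-of-types argument on the suffix alone yields the same exponent $N\min_{V:\,I(PV)\le t_2\ln M/N}D(PV||PQ)(1+o(1))$, so these small-$k$ terms also fit under the claimed bound.
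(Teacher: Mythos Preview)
Your proposal is correct and follows essentially the same route as the paper: the same union-bound split into the divergence piece and the mutual-information piece, the same use of the near--constant-composition argument (the paper's \eqref{avecpref}) for the first piece, and the same collapse via Lemma~\ref{ppetit} together with a short-block/long-block split for the second. One small imprecision: your claim that for $k\ge N/\ln N$ the suffix type $P_{N-k}$ is within $1/\ln N$ of $P$ is not literally true when $N-k<N/\ln N$; the paper (mirroring Case~II of Lemma~\ref{false-alarmb}) also isolates this symmetric short-suffix case, but it is handled by exactly the same argument you gave for $k<N/\ln N$, so this does not affect the outcome.
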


Comparing Lemma \ref{false-alarm} with Lemma \ref{false-alarmb} and Lemma
\ref{miss} with Lemma \ref{missb} we see that the false-alarm probability
bounds are essentially the same with and without the prefix condition, whereas
for the miss probability the bound is improved by the prefix condition. Note
also that, for the miss probability, the bound obtained  with the prefix
condition is the sum of two terms that involve convex optimizations, whereas
the bound without the prefix condition involves a non convex optimization, in
general more difficult to handle. To prove Lemmas \ref{false-alarmb} and
\ref{missb} we use similar bounding techniques as in the proofs of Lemmas
\ref{false-alarm} and \ref{miss} together with the following argument.

Suppose $\{(C_i,Y_i)\}_{i=1,\ldots,n}$ is a
sequence of i.i.d. pairs of random variables taking values in ${\cal{X}}\times
\cal{Y}$ so that $(C_1,Y_1)$ is distributed
according to some $J\in \cal{P}$.  It then follows, by Theorem \cite[Theorem
12.1.4]{CT}, that for a given type $V=V_XV_{Y|X}$ in ${\cal{P}}_n$
\begin{align}
\label{sspref}
\pr((C^n,Y^n) \:\; \text{has type}\;\; V) \leq  e^{-nD(V_XV_{Y|X}||J)}\;,
\end{align}
which implies that
\begin{align}\label{alep}
\pr((C^n,Y^n) \:\; \text{has type}\;\; V\;|\;C^n &\;\text{satisfies prefix
 condition})\pr(C^n \;\text{satisfies prefix
 condition})\nonumber \\
 &\leq e^{-nD(V_XV_{Y|X}||J)}\;.
 \end{align}
Now assuming that $n$ is larger than $N/\ln N$, the size of the smallest
codeword length that satisfies the prefix condition, we have that
$$\pr((C^n,Y^n) \:\; \text{has type}\;\; V\;|\; C^n \; \text{satisfies the prefix
condition})$$
has nonzero probability only if $||V_X -P||\leq 1/\ln N$. Assuming so,
since the probability that $C^n$ satisfies the prefix condition tends to one as
$n\rightarrow \infty$
(Lemma \ref{prefix}) we conclude from \eqref{alep} and by continuity of
$D(\cdot||J)$ that
 \begin{align}\label{avecpref}
 \pr((C^n,Y^n) \:\; \text{has type}\;\; V\;|\;C^n \;\text{satisfies prefix
 condition}){\leq} e^{-nD(PV_{Y|X}||J)(1+o(1))}
 \end{align}
as $N\rightarrow \infty$.

Comparing \eqref{sspref} and \eqref{avecpref} we see that the prefix condition 
essentially allows us to treat $C^n$ as being of composition $P$. Accordingly,
to prove Lemmas \ref{false-alarmb} and \ref{missb} we follow the steps of the proofs of
Lemmas \ref{false-alarm} and \ref{miss} and repeatedly use the above argument
(without explicitly mentioning it everywhere) in order to incorporate the prefix condition and change
the large deviations exponent of the form $D(V_XV_{Y|X}||J)$ to
$D(PV_{Y|X}||J)$.  The only additional technicality relates to the
small discrepancy that occurs because the prefix condition does not hold for
small prefix lengths, i.e., lengths smaller than $N/\ln N$. We recall that $M$ and $N$ are assumed to satisfy
\eqref{sca}.

\begin{proof}[Proof of Lemma \ref{false-alarmb}]

\emph{Case I: $E(m',n,i)$ is generated outside the message transmission period (i.e.,
$n<l$ or $n-i+1\geq l+N$)} \\
A similar computation as in \eqref{ghw1} yields as $N\rightarrow \infty$
\begin{align*}
\pr_{m,l}\left(E(m',n,i)\right)&\leq \pr_{m,l} (E(m',n,i,i))\nonumber \\
& \leq \sum_{\substack{V\in \Pc_i, V_X\approx P  \\ i
    I(V)\geq t_2 \ln M \\ iD(V_Y||Q(\cdot|\star))\geq t_1 \ln
    M}}e^{-iD(V||PQ(\cdot|\star))(1+o(1))}\nonumber \\ &\leq
    \sum_{\substack{V\in \Pc_i \\ i
    I(V)\geq t_2 \ln M \\ iD(V_Y||Q(\cdot|\star))\geq t_1 \ln
    M}}e^{-i(I(V)+D(V_Y||Q(\cdot|\star))(1+o(1))}\nonumber \\ &\leq
    \poly (N)M^{-t_2-t_1+o(1) }\;.
\end{align*}
where $V_X\approx P$ denotes $||V_X-P||\leq 1/\ln N$.

\emph{Case II: $E(m',n,i)$ is generated partly outside and partly inside the message transmission period
(i.e., $n\geq l$ and $n-i+1\leq l+N-1$)}\\
The event $E(m',n,i)$ involves the output random variables $Y_{n-i+1},Y_{n-i+2},\ldots,Y_n$, the
first $k$ being distributed according to the noise
distribution, and the remaining $i-k$ according to the distribution induced by the sent codeword. 
In order to deal with the discrepancy that results because codeword lengths of size
smaller than $N/\ln N$ do not satisfy the prefix condition, we distinguish two cases.
\begin{itemize}
\item
$k\geq N/\ln N$ and $i-k\geq N/\ln N$\\
A similar computation as in \eqref{ghw2} yields
\begin{align*}
\pr_{m,l}\left(E(m',n,i)\right)&\leq \sum_{\substack{V\in
    \Pc_{k}, W\in \Pc_{i-k} \\ V_X=P\approx \varepsilon, W_X\approx P\pm \varepsilon \\ k I(V)+
    (i-k)I(W)\geq t_2 \ln M
    }}e^{-(kD(V_1||PQ(\cdot|\star))+(i-k)D(V_2||PP_Y))(1+o(1))}\nonumber \\
    &\leq \sum_{\substack{V_1\in \Pc_{k}, V_2 \in \Pc_{i-k} \\
    k I(V_1)+ (i-k)I(V_2)\geq t_2 \ln M }}e^{- (k
    I(V_1)+(n-i)I(V_2))(1+o(1))}\nonumber \\ &\leq
    \poly (N)M^{-\alpha+o(1)}
\end{align*}
where $P_Y(\cdot)\triangleq\sum_{x\in {\cal{X}}}P(x)Q(\cdot|x)$.

\item
$k< N/\ln N$ or $i-k< N/\ln N$\\
We consider only the case $k< N/\ln N$, the case $i-k< N/\ln N$ being
obtained in the same way. Since $I(V)\leq \ln |{\cal{X}}|$ we have as $N\rightarrow \infty$
\begin{align*}
\pr_{m,l}\left(E(m',n,i)\right)&\leq \sum_{\substack{ V\in \Pc_{i-k}, V_X\approx
P
    \\ (N/\ln N)\ln |{\cal{X}}|+(i-k)I(V)\geq t_2 \ln M
    }}e^{-(i-k)(D(V||PP_Y)(1+o(1))}\nonumber \\
    &\leq \sum_{\substack{V\in \Pc_{i-k},V_X\approx P \\
 (N/\ln N)\ln|{\cal{X}}|+   (i-k)I(V)\geq t_2 \ln M }}e^{-
 (i-k)I(V)(1+o(1))}\nonumber \\ &\leq
    \poly (N)M^{-t_2+o(1)}\;.
\end{align*}

\end{itemize}
Combining the cases $I$ and $II$ we get as $N\rightarrow \infty$
\begin{align*}
\sum_{\substack{m'\ne m\\ n\in [1,\ldots,A+N-1]\\ i\in
[1,\ldots,N\wedge n] }}\pr_{m,l}\left(E(m',n,i)\right)&\leq
(M^{-(t_2-t_1-1+o(1))}A+M^{-(t_2-1+o(1))})\poly (N)
\end{align*}
yielding the desired result.
\end{proof}
\begin{proof}[Proof of Lemma \ref{missb}]
According to the proof of Lemma \ref{miss} we need to bound 
$$\pr_{m,l}(ND(\hat{P}_{{Y^{l+N-1}_l}}||Q(\cdot|\star))<  t_1 \ln M)$$
and
$$\sum_{k\in [1,\ldots,N]}\pr_{m,l}\left(
 k I(\hat{P}_{C^k(m),Y_{l}^{l+k-1}})
 +(N-k)I(\hat{P}_{C_{k+1}^N(m),Y_{l+k}^{l+N-1}})\leq t_2 \ln
 M\right)\;.$$
For the first term we apply the argument that precedes Lemma \ref{false-alarmb}
and immediately obtain
\begin{align}\label{pourap}
\pr_{m,l}&(ND(\hat{P}_{{Y^{l+N-1}_l}}||Q(\cdot|\star))<  t_1 \ln M)\nonumber \\
&\leq \poly (N) 
\exp\big[-N\inf_{\substack{V\in
{\cal{P}}^{{\cal{Y}}|{\cal{X}}}\\ D((PV)_Y||Q(\cdot|\star))<
t_1 \ln M/N}}D((PV)_Y||P_Y)(1+o(1))\big]
\end{align}
as $N\rightarrow \infty$. For the second term we proceed along the lines of
the set of inequalities \eqref{lautes} and, similarly to the case II
in the proof of Lemma \ref{false-alarmb}, we separately consider the situations
$k<N/\ln N $ and $k\geq N/\ln N$.  This yields
\begin{align*}
\sum_{k\in [1,\ldots,N]}&\pr_{m,l}\left(
 k I(\hat{P}_{C^k(m),Y_{l}^{l+k-1}})
 +(N-k)I(\hat{P}_{C_{k+1}^N(m),Y_{l+k}^{l+N-1}})\leq t_2 \ln
 M\right)\\
 &\leq \poly (N)\exp\big[-N\min_{\substack{V
\in {\cal{P}}^{{\cal{Y}}|{\cal{X}}}\\ I(PV)\leq t_2 \ln M/N}}
D(PV||PQ)(1+o(1))\big]
\end{align*}
as $N\rightarrow \infty$, which concludes the proof.
\end{proof}

\begin{proof}[Proof of Theorem \ref{ra}]
The proof is obtained by deriving bounds on the average decoding delay $(\tau-\nu)^+$
and on the error probability event
$\EuScript{E}$. In what follows we assume that the ratio 
$\ln M/N$ remains fixed as $N \rightarrow \infty$ so that \eqref{sca} is satisfied.
This in turn allow us to use Lemmas \ref{false-alarmb} and \ref{missb}. Also,
from now on we assume that $P$ is so that $I(PQ)>0$.

\noindent The average decoding delay is bounded as
\begin{align}\label{esppx}
\ex_{m,l}(\tau-l)^+&\leq \ex_{m,l}(\tau_m-l)^+\nonumber \\
&=\ex_{m,l}(\openone_{\tau_m<
l+N}(\tau_m-l)^+)+\ex_{m,l}(\openone_{\tau_m\geq l+N}(\tau_m-l)^+)
\end{align}
where $\openone_{\tau_m\geq l+N}$ is equal one if $\tau_m\geq l+N$, zero else. 

For the first term on the right hand side of \eqref{esppx} we have
\begin{align}\label{shitz}
\ex_{m,l}(\openone_{\tau_m< l+N}(\tau_m-l)^+)\leq j+ N\pr_{m,l}(\tau_m\geq
l+j)\;,
\end{align}
where\footnote{The term $1/M$ in the definition of $j$ can be replaced by any positive strictly
decreasing function of $M$.} $$j\triangleq \frac{t_2 \ln M(1+1/M)}{I(PQ)}d(\delta)\;,$$ 
with
\begin{align}\label{dm}
d(\delta)\triangleq \frac{I(PQ)}{\min_{\substack{V\in {\cal{P}}^{{\cal{Y}}|{\cal{X}}}
\\D(PV||PQ)\leq \delta}}  I(PV)}
\end{align}
and $\delta=\delta(M)=1/\sqrt{\ln M}$.
For now we assume that
\begin{align}\label{les}
j=\frac{t_2\ln M}{I(PQ)}(1+o(1))\quad \text{as } N\rightarrow \infty
\end{align} 
and show that the term $N\pr_{m,l}(\tau_m\geq
l+j)$ goes to zero as $N$ tends to infinity --- the equality \eqref{les} will be shown
at the end of the proof.
  Using the inequality
\eqref{sequel} with $N$ replaced by $j$ yields
\begin{align}
\label{dfwww}
{\mathbb{P}}_{m,l}(\tau_m\geq l+j)&\leq \pr_{m,l} \left(E(m,l+j-1,j)^c
\right)\nonumber \\
&\leq  \poly (N)\Big(\exp\big[-j\min_{\substack{V
\in {\cal{P}}^{{\cal{Y}}|{\cal{X}}}\\ I(PV)\leq t_2 \ln M/j}}
D(PV||PQ)(1+o(1))\big]\nonumber \\
&\hspace{1.86cm}+\exp\big[-j\inf_{\substack{V\in
{\cal{P}}^{{\cal{Y}}|{\cal{X}}}\\ D((PV)_Y||Q(\cdot|\star))<
t_1 \ln M/j}}D((PV)_Y||P_Y)(1+o(1))\big] \Big)\;.
\end{align}
We evaluate the first term in the large brackets in \eqref{dfwww}. Expanding $d(\delta)$ in
the definition of $j$ we get
\begin{align}\label{noni}\frac{t_2 \ln M(1+1/M)}{j}=\min_{V\in {\cal{P}}^{{\cal{Y}}|{\cal{X}}}:
D(PV||PQ)\leq
\delta}I(PV)
\end{align}
implying that\footnote{Here we are using the fact that if for some $\varepsilon
>0$ we have $\min_{x:
g(x)\leq c}f(x)=m+\varepsilon$, then
$\min_{x:f(x)\leq m}g(x)\geq c$.}
\begin{align*}
\min_{V\in {\cal{P}}^{{\cal{Y}}|{\cal{X}}}: I(PV)\leq
\frac{t_2 \ln M}{j}} D(PV||PQ)\geq
\delta\;.
\end{align*}
Since $\delta=1/\sqrt{\ln M}$ we obtain
\begin{align}
\label{dfw}
\exp\big[-j\min_{\substack{V
\in \Pc_{Y|X}\\ I(PV)\leq \frac{t_2 \ln M}{j}}} D(PV||PQ)\big]&\leq e^{-\Theta(\sqrt{\ln M})}\;.
\end{align}
We now turn to the second term in the large brackets in \eqref{dfwww}. Since
$j=\frac{t_2 \ln M}{I(PQ)}(1+o(1))$, we assume that $P$, $t_1\geq 0$, and
$t_2>1$ satisfy 
\begin{align}\label{enebo}
t_1< \frac{t_2
D(P_Y||Q(\cdot|\star))}{I(PQ)}
\end{align}
 so that $$\inf_{\substack{V\in
{\cal{P}}^{{\cal{Y}}|{\cal{X}}}\\ D((PV)_Y||Q(\cdot|\star))<
t_1 \ln M/j}}D((PV)_Y||P_Y)>0\;,$$
and hence
\begin{align}\label{dfw1}
\exp\big[-j\inf_{\substack{V\in
{\cal{P}}^{{\cal{Y}}|{\cal{X}}}\\ D((PV)_Y||Q(\cdot|\star))<
t_1 \ln M/j}}D((PV)_Y||P_Y)\big]\leq e^{-\Theta (\ln M)}\;.
\end{align}
From \eqref{dfwww}, \eqref{dfw}, and \eqref{dfw1} we have 
\begin{align*}
N\pr_{m,l}(\tau_m\geq
l+j)\rightarrow 0\quad \text{as }N\rightarrow \infty\;,
\end{align*}
and using \eqref{shitz} and \eqref{les} it follows that
\begin{align}\label{loublie}
\ex_{m,l}(\openone_{\tau_m< l+N}(\tau_m-l)^+)\leq \frac{t_2\ln M}{I(PQ)}(1+o(1))\;.
\end{align}

For the second
term on the right hand side of the equality in \eqref{esppx} we get
\begin{align*}
\ex_{m,l}(\openone_{\tau_m\geq l+N}(\tau_m-l)^+)\leq (A+N)\pr_{m,l}(\tau_m\geq
l+N)
\end{align*}
since $\tau_m\leq A+N-1$.
Further, using Lemma \ref{missb}
 \begin{align*}
\pr_{m,l}(\tau_m\geq l+N)&\leq \pr_{m,l}(E(m,l+N-1,N)^c)\nonumber \\
\leq  \poly (N)\Big( &\exp\big[-N\inf_{\substack{V\in
{\cal{P}}^{{\cal{Y}}|{\cal{X}}}\\ D((PV)_Y||Q(\cdot|\star))<
t_1 \ln M/N}}D((PV)_Y||P_Y)(1+o(1))\big]\nonumber \\
+&\exp\big[-N\min_{\substack{V
\in {\cal{P}}^{{\cal{Y}}|{\cal{X}}}\\ I(PV)\leq t_2 \ln M/N}}
D(PV||PQ)(1+o(1))\big] \Big)
\;,
\end{align*}
and thus
\begin{align}\label{biewx}
\ex_{m,l}&(\openone_{\tau_m\geq l+N}(\tau_m-l)^+)\nonumber \\
&\leq \poly (N)A\Big( \exp\big[-N\inf_{\substack{V\in
{\cal{P}}^{{\cal{Y}}|{\cal{X}}}\\ D((PV)_Y||Q(\cdot|\star))<
t_1 \ln M/N}}D((PV)_Y||P_Y)(1+o(1))\big]\nonumber \\
&\hspace{2.15cm}+\exp\big[-N\min_{\substack{V
\in {\cal{P}}^{{\cal{Y}}|{\cal{X}}}\\ I(PV)\leq t_2 \ln M/N}}
D(PV||PQ)(1+o(1))\big]
\Big)\;.
\end{align}
Letting $A=e^{N\alpha}$ with $\alpha\geq 0$ we have
$$\ex_{m,l}(\openone_{\tau_m\geq l+N}(\tau_m-l)^+)=o(1)\quad \text{as
}N\rightarrow \infty$$
provided that
 $P$, $t_1\geq 0$, $t_2> 1$, and the ratio $\ln M/N$ can be chosen so that the
inequalities
\begin{align}\label{baril}
\alpha &<\inf_{\substack{V\in {\cal{P}}^{{\cal{Y}}|{\cal{X}}}\\ D((PV)_Y||Q(\cdot|\star))\leq
t_1 \ln M/N}}D((PV)_Y||(PQ)_Y)\nonumber \\
 \alpha &<\min_{\substack{V
\in {\cal{P}}^{{\cal{Y}}|{\cal{X}}}\\ I(PV)\leq \frac{t_2 \ln M}{N}}} D(PV||PQ)
\end{align}
are satisfied. Therefore, if the inequalities from \eqref{enebo} and
\eqref{baril} are satisfied the delay is bounded as
\begin{align}\label{igze}\ex_{m,l}(\tau_m-l)^+\leq \frac{t_2\ln
M}{I(PQ)}(1+o(1))\;.\end{align}

We now bound the error probability. To that aim we consider the false-alarm
and miss events and obtain, by Lemmas \ref{false-alarmb} and \ref{missb}
\begin{align}\label{aneqprop}
\pr({\EuScript{E}})\leq\poly (N)\Bigg(&M^{-(t_1+t_2-1)(1+o(1))}A+M^{-(t_2-1)(1+o(1))}\nonumber
\\
&+ \exp\big[-N\min_{\substack{V\in
{\cal{P}}^{{\cal{Y}}|{\cal{X}}}\\ D((PV)_Y||Q(\cdot|\star))<
t_1 \ln M/N}}D((PV)_Y||(PQ)_Y)(1+o(1))\big]\nonumber \\
&+\exp\big[-N\inf_{\substack{V
\in {\cal{P}}^{{\cal{Y}}|{\cal{X}}}\\ I(PV)\leq t_2 \ln M/N}}
D(PV||PQ)(1+o(1))\big] \Bigg)\;.
\end{align}
Therefore, if in addition to the three inequalities given in \eqref{enebo} and
\eqref{baril} we impose that the ratio $\ln M/N$ satisfies
\begin{align*}
\frac{\ln M}{N}\geq \frac{\alpha}{\delta(t_1+t_2-1)}
\end{align*}
for some $\delta\in (0,1)$, the right hand side of \eqref{aneqprop} goes to zero as
$N$ tends to infinity, and using \eqref{igze} we deduce that the asynchronism exponent $\alpha$ can be achieved at rate $I(PQ)/t_2$.

To summarize, if $P$, $t_1\geq 0$, $t_2>1$, $\alpha$, and the ratio $\ln M/N$ satisfy the following conditions
\begin{align}\label{rmountain}
a.\quad &\alpha <\inf_{\substack{V\in {\cal{P}}^{{\cal{Y}}|{\cal{X}}}\\ D((PV)_Y||Q(\cdot|\star))<
\frac{t_1\alpha}{\delta(t_1+t_2-1)}}}D((PV)_Y||(PQ)_Y)\nonumber \\
b.\quad &\alpha <\min_{\substack{V
\in {\cal{P}}^{{\cal{Y}}|{\cal{X}}}\\ I(PV)\leq \frac{t_2\alpha}{\delta(t_1+t_2-1)}}} D(PV||PQ)\nonumber \\
c.\quad &\frac{t_1}{t_2}<
\frac{D((PQ)_Y||Q(\cdot|\star))}{I(PQ)}\\
d.\quad  &\frac{\ln M}{N}\geq \frac{\alpha}{\delta(t_1+t_2-1)}
\end{align}
for some $ \delta\in (0,1)$, then the
asynchronism exponent $\alpha$ can be achieved at rate $I(PQ)/t_2$. Note that
if the conditions $a$, $b$, and $c$ are satisfied for some $\alpha$,
$P$, $t_1\geq 0$, 
$t_2>1$, and $\delta\in (0,1)$ one can always find choose $N/\ln M$ so that the condition $d$ 
is satisfied. Hence, if the conditions $a$, $b$, and $c$ are satisfied
for some $\alpha$, $P$, $t_1\geq 0$, $t_2>1$, and $\delta\in (0,1)$ the
asynchronism exponent $\alpha$ can be achieved at rate $I(PQ)/t_2$.

To conclude the proof we show that $j=\frac{t_2 \ln M}{I(PQ)}(1+o(1))$. To that aim we show that $d(\delta)=1+o(1)$ as $\delta \rightarrow
0$. Since $I(PV)$ is a continuous function over the compact set
\begin{align}\label{setens}
\{V\in {\cal{P}}^{{\cal{Y}}|{\cal{X}}}:D(PV||PQ)\leq \delta\}\;,
\end{align}
 the minimum in the denominator of the right hand side of
(\ref{dm}) is well defined, and so is $d(\delta)$. We now show that for $\delta$
small enough, the set in \eqref{setens} contains no trivial conditional
probability $V$, that is no $V\in {\cal{P}}^{{\cal{Y}}|{\cal{X}}}$ such that
$V(\cdot|x)$ is the same for all $x\in {\cal{X}}$. This will imply that $d(\delta)=1+o(1)$ as $\delta \rightarrow 0$.

Let $W(x,y)=W_X(x)W_Y(y)$ for all $(x,y)\in {\cal{X}}\times {\cal{Y}}$. The
identity \eqref{viex} yields
\begin{align}\label{dm2}
D(PQ||W)&= I(PQ)+D(P||W_X)+D(P_Y||W_Y)\nonumber \\
&\geq I(PQ)
\end{align}
where $P_Y(y)\triangleq\sum_{x\in {\cal{X}}}P(x)Q(y|x)$. Since
the set ${\cal{P}}^\pi$ of product measures in ${\cal{P}}$ is compact
and $D(PQ||\,\cdot\,)$ is continuous over ${\cal{P}}^\pi$, from
(\ref{dm2}) we have
\begin{align}\label{dm4}
\min_{W\in {{\cal{P}}^\pi}}
D(PQ||W)\geq I(PQ)\;.
\end{align}
Since  $I(PQ)>0$, from \eqref{dm4} one deduces that $\min_{W\in {\cal{P}}^\pi} D(W||PQ)$ is strictly
positive\footnote{We use the fact that $D(P_1||P_2)=0$ if and only if $P_1=P_2$. }
and therefore the set \eqref{setens}
contains no trivial conditional probability. Therefore,
for $\delta$ small enough the denominator in the definition \eqref{dm}
is strictly positive, implying that $d(\delta)$ is
finite. We then deduce that $d(\delta)=1+o(1)$ as $\delta \rightarrow 0$.
\end{proof}

\section{Concluding remarks}
\label{conclusione}
 We introduced a new model for asynchronous and sparse communication
and derived scaling laws between asynchronism level and blocklength for
reliable and quick decoding. Perhaps the main conclusion is that even in the
regime of strong asynchronism, i.e., when the asynchronism level is exponential
with respect to the codeword length, reliable and quick decoding can be
achieved.

 At this point several directions might be pursued. Perhaps the first is the
characterization of the asynchronism exponent function $\alpha(\cdot,Q)$ at
positive rates. In order to make this problem easier one may want to consider a
less stringent rate definition. Indeed, the definition of rate we adopted
considers $\ex(\tau-\nu)^+$ as delay. As a consequence, in the exponential
asynchronism level we mostly focused on, it is difficult to guarantee high
communication rate; even though the probability of `missing the codeword' is
exponentially small in the codeword length, once the codeword is missed we pay
a huge penalty in terms of delay,  of the order of the asynchronism level which
is exponentially large in the codeword length. Therefore, instead of imposing $\ex(\tau-\nu)^+$
to be bounded by some $d$, we may consider a delay constraint of the
form $\pr((\tau-\nu)^+\leq d)\approx 1$ and define the rate as $\ln M/d$. 

Another direction is the extension of the proposed model to include the event
when no message is sent; the receiver knows that with  probability $1-p$ one
message is sent and with probability $p$ no
message is sent. For this setting `natural' scalings between $p$ and the
asynchronism level remain to be discovered.

Finally a word about feedback. We omitted feedback in our study in order to avoid a potential
additional source of asynchronism. Nevertheless since feedback is inherently
available in any communication system it is of interest to include, say, a
one-bit perfect feedback from the receiver to the transmitter. In this case
variable length codes can be used and the asynchronism level might
be defined directly with respect to $\ex(\tau-\nu)^+$ instead of the blocklength. 

\section{Appendix}
\begin{proof}[Proof of Lemma \ref{bucket}]
The binomial expansion for $P^s(T({{\hat{P}}}))$ (see, e.g., \cite[equation 12.25]{CT}) gives
$$P^s(T({{\hat{P}}}))=\binom{s}{s{{\hat{P}}}(a_1),s{{\hat{P}}}(a_2),\ldots,s{{\hat{P}}}(a_{|{\cal{A}}|})}\prod_{a\in
{\cal{A}}}P(a)^{s{{\hat{P}}}(a)}\;.$$ 
Using the hypothesis on
$P$, ${\hat{P}}$, and $\bar{P}$ gives $\hat{P}(a_i)\geq 3/s$, $i\in \{1,2\}$,
hence
\begin{align*}
\frac{P^s(T(\bar{P}))}{P^s(T({{\hat{P}}}))}&=\left(\frac{P(a_2)}{P(a_1)}\right)^3\frac{(s{\hat{P}}(a_1)-2)(s{\hat{P}}(a_1)-1)(s{\hat{P}}(a_1))}{(s{\hat{P}}(a_2)+1)(s{\hat{P}}(a_2)+2)(s{\hat{P}}(a_2)+3)}\nonumber
\\
&=\left(\frac{P(a_2)}{P(a_1)}\right)^3\left(\frac{{\hat{P}}(a_1)}{{\hat{P}}(a_2)}\right)^3
\frac{(1-1/s{\hat{P}}(a_1))(1-2/s{\hat{P}}(a_1))}{(1+1/s{\hat{P}}(a_2))(1+2/s{\hat{P}}(a_2))(1+3/s{\hat{P}}(a_2))}\\
&\geq \delta
\end{align*}
for some $\delta=\delta(\delta_0)>0$.
\end{proof}

\begin{lem}
\label{ppetit}
For any distribution $J$ on $\cal{X}\times {\cal{Y}}$ and any constant $r\geq 0$
\begin{align*}
\min_{t_1\in [0,1]}\min_{\substack{V_1,V_2\in {\cal{P}}\\
t_1 I(V_1)+(1-t_1)I(V_2)\leq r}}t_1 D(V_1||J)+(1-{t_1})D(V_2||J)=
\min_{\substack{V\in {\cal{P}}\\
I(V)\leq r}} D(V||J)\;.
\end{align*}
\end{lem}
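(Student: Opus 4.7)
The inequality $\mathrm{LHS}\le\mathrm{RHS}$ is immediate: taking $V_1=V_2=V^\star$ for any $t_1\in[0,1]$, where $V^\star$ achieves the right-hand side, yields an LHS-feasible triple with the same objective value $D(V^\star||J)$.

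For the reverse inequality, I will reduce the statement to a convexity claim. Define
\[f(s)\triangleq\min_{V\in\cal{P}:\,I(V)\le s}D(V||J),\]
so $\mathrm{RHS}=f(r)$. The inner LHS minimization decouples over $V_1,V_2$ once individual rate budgets $r_i=I(V_i)$ are fixed: since each $V_i$ is feasible for the problem defining $f(r_i)$, we have $D(V_i||J)\ge f(r_i)$, with equality achieved by choosing $V_i$ to be an $f(r_i)$-minimizer. Hence
\[\mathrm{LHS}=\inf_{t_1\in[0,1],\,r_1,r_2\ge 0:\,t_1 r_1+(1-t_1)r_2\le r}\bigl[t_1 f(r_1)+(1-t_1)f(r_2)\bigr],\]
which, using that $f$ is non-increasing, coincides with the lower convex envelope $f^{**}(r)$. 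The lemma thus asserts $f=f^{**}$, i.e.\ that $f$ is convex in $r$.

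Convexity of $f$ is the main step and the principal obstacle. The natural attempt---given minimizers $V_1,V_2$ of $f(r_1),f(r_2)$, form the mixture $V=\theta V_1+(1-\theta)V_2$---controls the divergence via joint convexity of $D(\cdot||J)$ but fails to directly control $I(V)$, because $V\mapsto I(V)$ is not convex on $\cal{P}$. My plan to handle this is to choose minimizers $V_1,V_2$ that share a common $X$-marginal $P$ (using the freedom afforded by possibly multiple minimizers via a first-order/KKT argument on the inner problem). Once the marginals agree, $V=\theta V_1+(1-\theta)V_2$ also has $X$-marginal $P$ and its $Y$-conditional is the weighted mixture of the two conditionals; convexity of $V_{Y|X}\mapsto I(P,V_{Y|X})$ for fixed $P$ (a consequence of the log-sum inequality) then yields $I(V)\le\theta I(V_1)+(1-\theta)I(V_2)\le r$, and $V$ is an admissible candidate for the RHS at rate $r$ with divergence bounded by the time-shared objective. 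This closes the loop $f(r)\le\theta f(r_1)+(1-\theta)f(r_2)$, giving convexity.
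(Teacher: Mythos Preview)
Your reduction to convexity of $f(r)=\min_{I(V)\le r}D(V||J)$ is exactly the paper's strategy; the paper splits it into (i) the minimizer sits on the boundary $I(V)=r$ when $r<I(J)$, and (ii) $r\mapsto\min_{I(V)=r}D(V||J)$ is convex. For both claims the paper takes precisely the route you dismiss: it forms $V=\lambda V_1+(1-\lambda)V_2$ and, via joint convexity of $D(\cdot||\cdot)$ in both arguments, asserts $I(V)\le\lambda I(V_1)+(1-\lambda)I(V_2)$ (in claim~(i) this is spelled out as $I(\bar V)=D(\lambda V+(1-\lambda)J\,||\,\lambda V_XV_Y+(1-\lambda)J_XJ_Y)\le\lambda I(V)+(1-\lambda)I(J)$). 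Your instinct that this step is delicate is well placed---the reference measure $\lambda V_XV_Y+(1-\lambda)J_XJ_Y$ has the correct marginals but is not in general a product, so the identification with $I(\bar V)$ is not immediate---but that is how the paper proceeds, with no common-marginal device.

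The genuine gap in your own proposal is the common-$X$-marginal selection. You assert that minimizers $V_1,V_2$ of $f(r_1),f(r_2)$ can be chosen to share the same $X$-marginal via a first-order/KKT argument, but this is not carried out and is far from obvious: the stationarity conditions for $\min_V[D(V||J)+\mu I(V)]$ tie $V_X$ to the multiplier $\mu$ (equivalently to $r$), so the optimal $X$-marginal will in general shift with $r$. Until that selection is actually justified, your plan does not establish convexity of $f$ and the reverse inequality remains open.
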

\begin{proof}
If $r\geq I(J)$ the claim trivially holds, since the left and right hand side of the above equation equal to
zero. From now on we assume that $r<I(J)$.

Define
\begin{align*}
a=\min_{t_1\in [0,1]}\min_{\substack{V_1,V_2\in {\cal{P}}\\
t_1 I(V_1)+(1-t_1)I(V_2)\leq r\\ I(V_1)=I(V_2)}}t_1 D(V_1||J)+(1-{t_1})D(V_2||J)
\end{align*}
and
\begin{align*}
b=\min_{t_1\in [0,1]}\inf_{\substack{V_1,V_2\in {\cal{P}}\\
t_1 I(V_1)+(1-t_1)I(V_2)\leq r\\ I(V_1)>I(V_2)}}t_1
D(V_1||J)+(1-{t_1})D(V_2||J)\;.
\end{align*}
Since $a=\min_{\substack{V\in {\cal{P}}\\
I(V)\leq r}} D(V||J)$ to prove the Lemma it suffices to show that
$b\geq \min_{\substack{V\in {\cal{P}}\\
I(V)\leq r}} D(V||J)$. This is done via the following two claims proved below:
\begin{itemize}
\item
claim i. $\min_{V:I(V)\leq r}D(V||J)=\min_{V:I(V)= r}D(V||J)$.
\item
claim ii. the function $f(r)\triangleq \min_{V:I(V)=r}D(V||J)$ is convex.
\end{itemize}
\noindent Using the above claims we have
\begin{align*}
b&=\inf_{\substack{r_1>r_2\\ \frac{r-r_2}{r_1-r_2}r_2+\frac{r_1-r}{r_1-r_2}r_1=r}}
\frac{r-r_2}{r_1-r_2}f(r_1)+\frac{r_1-r}{r_1-r_2}f(r_2)\nonumber \\
&\geq f(r)
\end{align*}
and therefore $b\geq \min_{\substack{V\in {\cal{P}}\\
I(V)\leq r}} D(V||J)$.

The proof of the above claims is based on the convexity of $D(J_1||J_2)$ in the
pair $(J_1,J_2)$ (see,
e.g., \cite[Lemma 3.5, p.50]{CK}). For claim i, let $r>0$ and suppose that
$I(V)<r$.\footnote{If $r=0$ the claim holds trivially.} By  defining $\bar{V}=\lambda
V+(1-\lambda)J$ with $\lambda \in [0,1)$ we have
$D(\bar{V}||J)<D(V||J)$ by convexity.
On the other hand letting $V_X$ and $V_Y$
denote the left and right marginals of $V$ we have
we have
\begin{align*}
I(\bar{V})&=D(\lambda V+(1-\lambda)J||\lambda V_XV_Y+(1-\lambda)J_XJ_Y)\\
& =\lambda D( V|| V_XV_Y)+(1-\lambda)D(J||J_XJ_Y)\\
&=\lambda I(V)+(1-\lambda)I(J)\\
&<r
\end{align*}
where the inequality holds for $\lambda$ sufficiently close to one. Therefore
$\bar{V}$ strictly improves upon $V$ and claim i follows.\footnote{Notice that in
\cite[p.169]{CK} a similar argument holds for the sphere packing exponent.}

For claim ii, let $V_1$ and $V_2$ achieve $f(r_1)$ and $f(r_2)$, for some $r_1\ne
r_2$, and let $V=\lambda V_1+(1-\lambda)V_2$. By convexity we have
\begin{align*}
D(V||J)&\leq \lambda D(V_1||J)+(1-\lambda)D(V_2||J)\\
&=\lambda f(r_1)+(1-\lambda)f(r_2)
\end{align*}
and $I(V)\leq r$. This yields claim ii.
\end{proof}

\section*{Acknowledgment}
The authors wish to thank Ashish Khisti for interesting
discussions.

\bibliographystyle{amsplain}
\bibliography{../../../common_files/bibiog}

\end{document}